\documentclass[a4paper,11pt,leqno]{article}

\usepackage{graphicx}
\usepackage{amsmath}
\usepackage{amssymb}
\usepackage{amsthm}
\usepackage{fullpage}
\usepackage{bm}
\usepackage{xcolor}
\usepackage{url}
\usepackage{hyperref}
\usepackage{setspace}
\usepackage{chngcntr}
\usepackage[sort,round]{natbib}

\counterwithin{figure}{section}
\counterwithin{table}{section}
\numberwithin{equation}{section}



\hypersetup{ 
  pdfnewwindow=true, 
  colorlinks=true, 
  linkcolor=red,
  citecolor=brown, 
  filecolor=magenta, 
  urlcolor=cyan}

\newtheorem{proposition}{Proposition}
\newtheorem{corollary}{Corollary}
\newtheorem{theorem}{Theorem}
\numberwithin{theorem}{section}
\numberwithin{corollary}{section}
\numberwithin{proposition}{section}


\begin{document}
\title{On Time Scaling of Semivariance in a Jump-Diffusion
  Process}
\author{Rodrigue Oeuvray\footnote{Pictet Asset Management SA, route
    des Acacias 60, CH-1211 Geneva 73,
    Switzerland (\href{mailto:rodrigue.oeuvray@gmail.com}{rodrigue.oeuvray@gmail.com})}
  \and Pascal Junod\footnote{University of Applied Sciences and Arts Western
    Switzerland / HEIG-VD, route de Cheseaux 1, CH-1401
    Yverdon-les-Bains, Switzerland
    (\href{mailto:pascal.junod@heig-vd.ch}{pascal.junod@heig-vd.ch})}}
\maketitle
\thispagestyle{empty}

\begin{abstract}
  The aim of this paper is to examine the time scaling of the
  semivariance when returns are modeled by various types of
  jump-diffusion processes, including stochastic volatility
  models with jumps in returns and in volatility. In particular, 
  we derive an exact formula for the semivariance when the volatility
  is kept constant, explaining how it should be scaled when
  considering a lower frequency. We also provide and justify the use
  of a generalization of the Ball-Torous approximation of a
  jump-diffusion process, this new model appearing to deliver a more
  accurate estimation of the downside risk. We use Markov Chain
  Monte Carlo (MCMC) methods to fit our stochastic volatility
  model. For the tests, we apply our methodology to a highly
  skewed set of returns based on the Barclays US High Yield Index,
  where we compare different time scalings for the semivariance. Our
  work shows that the square root of the time horizon seems to be a
  poor approximation in the context of semivariance and that our
  methodology based on jump-diffusion processes gives much better
  results.\\
~\\
  \textbf{Keywords: } time-scaling of risk, semivariance, jump
  diffusion, stochastic volatility, MCMC methods
\end{abstract}

\section{Introduction}
\label{sec:intro}
Modern portfolio theory has shown that investing in certain asset
classes promising higher returns has always been linked with a higher
variability (also called volatility) of those returns, hence resulting
in increased risks for the investor. Hence, it is one of the
main tasks of financial engineering to accurately estimate the
variability of the return of a given asset (or portfolio of
assets) and take this figure into account in various tasks, including
risk management, finding optimal portfolio strategies for a given
risk aversion level, or derivatives pricing. 

For instance, properly estimating volatilities is essential
for applying the~\cite{BlaSch73} option pricing method;
another prominent example is the estimation of return distribution
quantiles for computing value-at-risk (VaR) figures, a method which is
typically recommended by international banking regulation authorities,
such as the Basel Committee on Banking Supervision. This approach is
also widely applied in practice in internal risk management systems of
many financial institutions worldwide. 

In practice, sufficient statistical information about the past
behavior of an asset's return is often not available, hence one
is commonly forced to use the so-called \emph{square-root-of-time}
rule. Essentially, this rule transforms high-frequency risk estimates
(for instance, gathered with a 1-day period over several years) into a
lower frequency $T$ (like a 1-year period) by multiplying the
volatility by a factor of $\sqrt{T}$. As an illustration, the Basel
regulations recommend to compute the VaR for the ten day regulatory
requirement by estimating a 1-day VaR and by multiplying this value by
$\sqrt{10}$, where the VaR is the value that solves the equation 
\begin{equation*}
  \varepsilon = \int_{-\infty}^{-\text{VaR}} \hat{f}(r)dr
\end{equation*}
given the density $\hat{f}(r)$ of the
bank's return estimated probability distribution and a confidence
level $\varepsilon$, fixed for instance to $1\%$. 

It is well-known that scaling volatilities with the
square-root-of-time is only accurate under a certain number of
assumptions that are typically not observed in practice: according to
~\cite{DanZie06},
returns need to be homoscedastic and conditionally serially
uncorrelated at all leads, an assumption slightly weaker than the one
of independently and identically distributed (iid)
returns. Dan{\'\i}elsson and Zigrand show furthermore that
for the square-root-of-time rule to be correct for all quantiles and
horizons implies the iid property of the zero-means returns, but also
that the returns are normally distributed. 

In this paper, we are interested in studying the effects of applying
the square-root-of-time rule on the \emph{semivariance} of a
continuous jump-diffusion process. As a first step, the semivariance
being a downside risk measure, we quickly recall its history and
properties in the following section. 

\subsection{Downside Risk and Semivariance}
\label{subsec:downsiderisk}
Downside risk measures have appeared in the context of portfolio
theory in the 1950's, with the development by~\cite{Mar52}
and~\cite{Roy52} of decision-making tools helping to manage
risky investment portfolios. \cite{Mar52} showed how to
exploit the averages, variances and covariances of the return
distributions of assets contained in a portfolio in order to compute
an efficient frontier on which every portfolio either maximizes the
expected return for a given variance (i.e., risk level), or minimizes
the variance for a given expected return. In the scenario of
Markowitz, a utility function, defining the investor's sensitivity to
changing wealth and risk, is used to pick the proper portfolio on the
optimal border. 

On his side, \cite{Roy52} was willing to derive a
practical method allowing to determine the best risk-return trade-off;
as he was not convinced that it is feasible to model in practice
the sensitivity to risk of a human being with a utility function, he
chose to assume that an investor would prefer the investment with
the smallest probability of going below a disaster level, or a target
return. Recognizing the wisdom of this claim, \cite{Mar59}
figured out two very important points, namely that only
the downside risk is relevant for an investor, and that return
distributions might be skewed, i.e., not symmetrically distributed, in
practice. In that spirit, Markowitz suggested to use the following
variability measure, that he called a \emph{semivariance}, as it only
takes into account a subset of the return distribution: 
\begin{equation}
  \label{eq:TSV}
  \int_{-\infty}^{\tau} (\tau-r)^2f(r)dr
\end{equation}
where $f(r)$ denotes the density of the 
returns probability distribution, $R$ denotes a random variable
distributed according to $f(r)$ and $\tau$ is a return target level. If
$\tau$ is equal to $\mu_R = \int rf(r)dr$, then~\eqref{eq:TSV}
is called the \emph{below-mean} semivariance
of $R$, while if $\tau$ is arbitrary, \eqref{eq:TSV} is
called the \emph{below-target} semivariance of $R$, where $\tau$ is
defined to be the target return. In other words, only the deviations
to the left of the returns distribution average, or a fixed return
target are accounted for in the computations of the
variability. Similarly, the square root of a semivariance is called a
\emph{semideviation}, with analogy to the standard deviation. Note
that for a symmetrical, i.e., non-skewed return distribution, the
variance of a random variable $R$ is equal to twice its below-mean
semivariance. 

The~\cite{Sha66} ratio is a measure of the risk-adjusted return
of an asset, a portfolio or an investment strategy, that quantifies
the excess return per unit of deviation; it is defined as 
\begin{equation}
\frac{\mathrm{E}[R_\mathsf{A} -
    R_\mathsf{B}]}{\sqrt{\mathrm{Var}[R_\mathsf{A} -R_\mathsf{B}]}}, 
\end{equation}
where $R_\mathsf{A}$ and $R_\mathsf{B}$ are random variables
modeling the returns of assets $\mathsf{A}$ and $\mathsf{B}$,
respectively. A prominent variant of the Sharpe ratio,
called the \emph{Sortino ratio} (see~\cite{SorVDM91}), is relying on
the semideviation instead of the standard deviation of the returns
distribution. It is well-known and easily understood that the Sharpe
and Sortino ratios tend to give very different results for
highly-skewed return distributions. 

Finally, we would like to note that the concept of semivariance has
been generalized, resulting in the development of \emph{lower partial
  moments} by~\cite{Baw75} and~\cite{Fish77}. Essentially, the square
is replaced by an arbitrary power $a$ that can freely vary: 
\begin{equation}
\int_{-\infty}^{\tau} (\tau-r)^af(r)dr
\end{equation}
Varying $a$ might help in modeling the fact that an investor is more
(through larger values of $a$) or less (through smaller values of $a$)
sensitive to risk. In this paper, we have chosen to stick to $a=2$ for
simplicity reasons. In the following, we recall the concepts of
jump-diffusion models.  

\subsection{Jump-Diffusion Models}
\label{subsec:jumpmodels}
Jump-diffusion models are continuous-time stochastic processes
introduced in quantitative finance by~\cite{Mer76}, extending
the celebrated work of~\cite{BlaSch73} on option
pricing. These models are a mixture of a standard diffusion process
and a jump process. They are typically used to reproduce stylized
facts observed in asset price dynamics, such as mean-reversion and
jumps. Indeed, modeling an asset price as a standard
Brownian process implies that it is very unlikely that large jumps
over a short period might occur, as it is sometimes the case in real
life, unless for unrealistically large volatility values. Hence,
introducing the concept of jumps allows to take into account those
brutal price variations, which is especially useful when considering
risk management, for instance. 

Various specifications have been proposed in the literature and we
refer the reader to~\cite{ConTan04} for an extensive review. In what
follows, we consider first (in~\S\ref{sec:generalization}) the
standard jump-diffusion model with time invariant coefficients,
constant volatility and Gaussian distributed jumps. Later,
in~\S\ref{sec:jump_vol}, we will also consider more elaborated
stochastic processes, involving random jumps in returns and in
volatility. 

A basic jump-diffusion stochastic process is a
mixture of a standard Brownian process with constant drift $\mu$ and
volatility $\sigma$ and of a (statistically independent) compound
Poisson process with parameter $\lambda$ and whose jump size is
distributed according to an independent normal law $\mathcal{N}(\mu_Q,
\sigma^2_Q)$. More precisely, this model can be expressed as the
following stochastic differential equation: 
\begin{equation}
dX(t) = X(t) (\mu dt + \sigma dW(t) + J(t)dP(t)),
\label{eqn:process}
\end{equation}
where $X(t)$ denotes the process that describes the price of a financial
asset, with $\Pr[X(0) > 0] = 1$, where $\mu \in \mathbb{R}$ is the
process drift coefficient, $\sigma^2> 0$ is the process
variance, $W(t)$ is a standard Wiener process, $P(t)$ is a Poisson
process with constant intensity $\lambda > 0$ and $J (t)$ is the
process generating the jump size, that together with $P(t)$ forms a
compound Poisson process. The solution of the stochastic differential
equation~\eqref{eqn:process} is given by 
\begin{equation}
X(t) = X(0) e^{ \left(\mu-\frac{\sigma^2}{2}\right)t + \sigma W(t) +
  \sum_{k=1}^{P(t)}Q_k},
\label{eqn:process2}
\end{equation}
where $Q_k$ is implicitly defined according to $J (T_k) =
e^{Q_k}-1$, and $T_k$ is the time at which the $k$-th jump of the
Poisson process occurs. 
If $P(t) = 0$, the sum is zero by convention. We assume that the $Q_k$
form an independent and identically normally distributed sequence
with mean $\mu_Q$ and variance $\sigma^2_Q$.

The log-return of $X(t)$ over a $t$-period is defined as $Y_t =
\log X(t) - \log X(0)$ and, from~\eqref{eqn:process2}, its dynamic is
given by
\begin{equation}
Y_t = \left(\mu-\frac{\sigma^2}{2}\right) t + \sigma W(t) +
  \sum_{k=1}^{P(t)}Q_k,
\label{eqn:logret}
\end{equation}
The distribution of $Y_t$ is an infinite mixture of Gaussian
distributions 
\begin{equation*}
\mathcal{N}\left( \left(\mu-0.5\sigma^2\right) t + k\mu_Q, \sigma^2 t
  + k \sigma_Q^2) \right)
\end{equation*}
 and has a density function given by
\begin{equation}
f_{Y_t} (y) =  \sum_{k=0}^{+\infty} \left( \frac{e^{-\lambda  t}
    (\lambda t)^k}{k!} \frac{1}{\sqrt{2 \pi (\sigma^2 t + k
      \sigma_Q^2)}} e^{-\frac{1}{2}\frac{(y-((\mu-0.5\sigma^2) t +k
      \mu_Q )}{\sigma^2 t + k \sigma_Q^2}} \right) .
\label{eqn:dens0}
\end{equation}

\subsection{Contributions and Outline of this Paper}
\label{subsec:outline}
Our contributions in this paper can be summarized as follows: first of
all, we derive in~\S\ref{sec:explicitform} an explicit formula for
computing the semivariance of a standard jump-diffusion process when
the volatility is constant. To the best of our knowledge, it is the
first time that such a formula is
provided. Second, we propose in~\S\ref{sec:generalization} a
generalization of the~\cite{BalTor83,BalTor85} approximation of a
jump-diffusion process. Indeed, the simplification brought by Ball and
Torous is based on the fact that, during a sufficiently short time
period, and assuming a small jump intensity parameter, only a single
jump can occur. 

By doing so, the authors want to capture large and infrequent events
as opposed to frequent but small jumps. However, our analysis
in~\S\ref{sec:app} shows that limiting the jump intensity parameter
may result in an underestimation of the risk; hence, from a risk
management perspective, this approach does not seem to be
appropriate. It is the reason why we have preferred not to impose any
arbitrary condition on the Poisson
process intensity parameter and to estimate it by the maximum
likelihood method. Our extension of the work of Ball and Torous also
implies that more that one jump may occur during a single day. This is
a consequence of the fact that, when the intensity
parameter is sufficiently large, the probability of obtaining
more than one jump is then not negligible anymore. The only remaining
constraint that we keep in our approach is the fact that $\lambda$
should be smaller than a (large) upper bound. However, we show
in~\S\ref{subsec:charac} that this constraint is actually not a strong
limitation. Third, we apply our results in~\S\ref{sec:app} to compute
an estimation of the semivariance based on the Barclays US High Yield
Index returns, showing that the standard square-root-of-time rule
indeed may underestimate risk in certain periods and overestimate it
in other ones. For this, we make use of a customized optimization
algorithm based on differential evolution to maximize a likelihood
function. Last but not least, we discuss in~\S\ref{sec:jump_vol} the
extension of our work to a jump diffusion model with jumps in returns
and in volatility. Therein, we first recall the importance of
considering random jumps in returns and in volatility. Then, we
describe the stochastic volatility model that we use, which is an
extension of the model proposed by~\cite{EraJohPol03}. The statistical
estimation of its parameters is
addressed in the next section. We use in particular Markov Chain
  Monte Carlo (MCMC) methods to
derive their values. We propose in~\S\ref{sub:semi} a method to
compute an annualized semideviation once the model parameters have
been determined. Finally, we present some experimental results. 

\section{An Explicit Form for the Semivariance of a Jump-Diffusion
  Process}
\label{sec:explicitform}

We derive in this section an explicit formula for the semivariance
of a standard jump-diffusion model with time invariant coefficients,
constant volatility and Gaussian distributed jumps. To the best of our 
knowledge, this is the first time that an explicit formula is provided
for computing the semivariance.

In the following, let us denote respectively by $\phi(x)$ and
$\Phi(x)$ the probability density function and the cumulative
distribution function of a standard normal distribution
$\mathcal{N}(0, 1)$ with mean $0$ and variance $1$, i.e., 
\begin{equation*}
  \phi(x) = \frac{1}{\sqrt{2\pi}}e^{-\frac{x^2}{2}} \text{ and } 
  \Phi(x) = \int_{-\infty}^{x} \phi(t)\,dt \text{ for } x \in
  \mathbb{R}.
\end{equation*}
\begin{theorem}
  \label{the:semivariance}
  The semivariance of the density~\eqref{eqn:dens0} is given by
  \begin{equation}
    \sum_{k=0}^{+\infty} \frac{e^{-\lambda  t} (\lambda t)^k}{k!} \left(
      (D-\mu_k)^2 \Phi(D_k) + \sigma_k (D-\mu_k) f(D_k) +\sigma_k^2
      \Phi(D_k)  \right) ,
    \label{eqn:semivariance}
  \end{equation}
  where $\mu_k = \left(\mu - \frac{\sigma^2}{2}\right)t + k \mu_Q$, 
  $\sigma_k^2 = \sigma^2 t + k \mu_Q^2$, $D_k =
  \frac{D-\mu_k}{\sigma_k}$ and $k = 0,...,+\infty$.
\end{theorem}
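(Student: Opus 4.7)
The plan is to compute the semivariance
$\mathrm{SV} = \int_{-\infty}^{D} (D-y)^{2} f_{Y_t}(y)\,dy$
by exploiting the fact that the density $f_{Y_t}$ in \eqref{eqn:dens0} is a Poisson-weighted countable mixture of Gaussian densities $\phi_k(y)$, each with mean $\mu_k$ and variance $\sigma_k^{2}$. Since every summand is non-negative, Tonelli's theorem lets me interchange the integral and the sum, reducing the problem to evaluating, for each fixed $k$, the conditional semivariance
\begin{equation*}
I_k \;=\; \int_{-\infty}^{D} (D-y)^{2}\,\phi_k(y)\,dy.
\end{equation*}
The outer Poisson weights $e^{-\lambda t}(\lambda t)^{k}/k!$ then reappear in front of each $I_k$, giving the overall structure of \eqref{eqn:semivariance}.

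To evaluate $I_k$, I would standardize via $z = (y-\mu_k)/\sigma_k$, so that $\phi_k(y)\,dy = \phi(z)\,dz$ and the upper limit becomes $D_k = (D-\mu_k)/\sigma_k$. Expanding $(D-y)^{2} = (D-\mu_k)^{2} - 2(D-\mu_k)\sigma_k z + \sigma_k^{2} z^{2}$ splits $I_k$ into three pieces, each of which is a truncated moment of the standard normal. The constant piece gives $(D-\mu_k)^{2}\Phi(D_k)$, and the linear piece uses the identity $\int_{-\infty}^{D_k} z\,\phi(z)\,dz = -\phi(D_k)$, yielding $+2(D-\mu_k)\sigma_k\,\phi(D_k)$. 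For the quadratic piece, integration by parts (or the standard truncated second moment) gives $\int_{-\infty}^{D_k} z^{2}\phi(z)\,dz = \Phi(D_k) - D_k\,\phi(D_k)$, contributing $\sigma_k^{2}\Phi(D_k) - \sigma_k^{2} D_k\,\phi(D_k)$.

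The last step is the small but essential simplification: using $\sigma_k^{2} D_k = \sigma_k (D-\mu_k)$, the coefficient of $\phi(D_k)$ collapses from $2(D-\mu_k)\sigma_k - \sigma_k(D-\mu_k)$ down to $\sigma_k(D-\mu_k)$. Collecting the three contributions, I obtain
\begin{equation*}
I_k \;=\; (D-\mu_k)^{2}\Phi(D_k) + \sigma_k(D-\mu_k)\phi(D_k) + \sigma_k^{2}\Phi(D_k),
\end{equation*}
which, after re-inserting the Poisson weights, is exactly formula \eqref{eqn:semivariance}.

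There is no real analytic difficulty here; the proof is essentially bookkeeping once one recognizes the mixture structure. The main thing to watch is the cancellation in the middle term — a naive count would give coefficient $2$ in front of $\sigma_k(D-\mu_k)\phi(D_k)$, and the passage to coefficient $1$ relies on the $-\sigma_k^{2}D_k\phi(D_k)$ contribution from the second-moment integral. A secondary point is justifying the swap of summation and integration; this is immediate from non-negativity, but is worth stating explicitly so that the formula is valid whenever the series converges, which it does for all finite $\lambda t$.
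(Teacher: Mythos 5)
Your proof is correct and follows essentially the same route as the paper's: the paper proves the single-Gaussian case (Proposition~\ref{prop:prop0}) by the same standardization and the same truncated-moment identities~\eqref{eq:easy_identities}, including the identical cancellation of the coefficient $2$ via the $-\sigma_k^2 D_k\phi(D_k)$ term, and then applies it componentwise to the mixture~\eqref{eqn:dens0}. Your explicit appeal to Tonelli for the sum--integral interchange is a small point the paper leaves implicit, but it is not a different argument.
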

The proof is given in Appendix~\ref{app:proof}. For a pure diffusion process without jump, the previous
formula~\eqref{eqn:semivariance} simplifies to the following one.
\begin{corollary}
The semivariance of a pure diffusion process with drift $\mu$ and
volatility $\sigma$ is equal to 
\begin{equation}
(D-\mu_0)^2 \Phi(D_0) + \sigma_0 (D-\mu_0) f(D_0) +\sigma_0^2 \Phi(D_0),
\label{eqn:pure}
\end{equation}
where $\mu_0 = \left(\mu - \frac{\sigma^2}{2}\right)t$, $\sigma_0^2 =
\sigma^2 t$ and $D_0 = \frac{D-\mu_0}{\sigma_0}$.
\end{corollary}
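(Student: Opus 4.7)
The plan is to derive the corollary as an immediate specialization of Theorem~\ref{the:semivariance}, and then sketch a direct derivation from scratch as a sanity check that the formula is self-consistent.

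For the specialization route, first I would observe that a pure diffusion process corresponds to the absence of any jump component in~\eqref{eqn:process}, which means that the Poisson intensity $\lambda$ vanishes (equivalently, $P(t)=0$ almost surely). The Poisson mixing weights $e^{-\lambda t}(\lambda t)^k/k!$ in formula~\eqref{eqn:semivariance} then collapse to the Kronecker symbol $\delta_{k,0}$, since for $\lambda=0$ the factor $(\lambda t)^k$ kills every term with $k\ge 1$ and the $k=0$ term equals $1$. Only the $k=0$ summand survives, and by construction $\mu_0=(\mu-\sigma^2/2)t$ and $\sigma_0^2=\sigma^2 t$, with $D_0=(D-\mu_0)/\sigma_0$, yielding exactly~\eqref{eqn:pure}.

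As a self-contained check I would also compute the quantity directly. Under the pure diffusion, $Y_t=(\mu-\sigma^2/2)t+\sigma W(t)$ is Gaussian with mean $\mu_0$ and variance $\sigma_0^2$, so the semivariance is
\begin{equation*}
\int_{-\infty}^{D}(D-y)^2\,\frac{1}{\sigma_0}\phi\!\left(\tfrac{y-\mu_0}{\sigma_0}\right)dy.
\end{equation*}
After the change of variables $z=(y-\mu_0)/\sigma_0$, the integration runs up to $D_0$ and the integrand expands as $(D-\mu_0)^2-2\sigma_0(D-\mu_0)z+\sigma_0^2 z^2$ times $\phi(z)$. I would then plug in the three standard truncated moments of the $\mathcal{N}(0,1)$ law, namely $\int_{-\infty}^{D_0}\phi(z)\,dz=\Phi(D_0)$, $\int_{-\infty}^{D_0}z\phi(z)\,dz=-\phi(D_0)$, and $\int_{-\infty}^{D_0}z^2\phi(z)\,dz=\Phi(D_0)-D_0\phi(D_0)$. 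Collecting the resulting pieces, the two $\phi(D_0)$ contributions combine (using $D_0\sigma_0=D-\mu_0$) into the single cross term $\sigma_0(D-\mu_0)f(D_0)$ displayed in~\eqref{eqn:pure}, while the two $\Phi(D_0)$ contributions give the quadratic and variance terms.

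There is essentially no obstacle: the only minor care points are keeping track of the sign of the first-moment identity and verifying that the $-D_0\phi(D_0)$ coming from the second moment cancels half of the $2(D-\mu_0)\sigma_0\phi(D_0)$ coming from the cross term, which is what produces the unit coefficient in front of $\sigma_0(D-\mu_0)f(D_0)$ in the stated formula. Since the corollary is nothing more than Theorem~\ref{the:semivariance} with $\lambda=0$, the whole argument fits in a line once the theorem has been established; writing the direct calculation in parallel merely confirms that the mixture structure in~\eqref{eqn:semivariance} is faithful to the Gaussian building block.
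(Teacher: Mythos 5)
Your proof is correct and takes essentially the same approach as the paper: the paper derives the corollary by observing that the pure-diffusion log-return $Y_t$ is Gaussian with mean $\mu_0$ and variance $\sigma_0^2$ and applying Proposition~\ref{prop:prop0}, whose proof is exactly your direct truncated-moment computation (including the cancellation that reduces the coefficient of $\sigma_0(D-\mu_0)\phi(D_0)$ from $2$ to $1$). Your alternative route of setting $\lambda=0$ in Theorem~\ref{the:semivariance}, so that the Poisson weights collapse to the $k=0$ term, is an equally valid one-line variant that the paper does not spell out.
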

The proof is a direct consequence of Proposition~\ref{prop:prop0}
given in Appendix~\ref{app:proof} and of the fact
that~\eqref{eqn:logret} can be rewritten as 
\begin{equation}
Y_t = \left(\mu-\frac{\sigma^2}{2}\right) t + \sigma W(t)
\end{equation}
when we consider a pure diffusion process.
\section{Generalization of the Ball-Torous Approach}
\label{sec:generalization}

The task of fitting a jump-diffusion model to real-world data is not
as easy at it appears, and this fact has been early recognized, see
for instance~\cite{Bec81} or~\cite{Hon98}. Essentially, the reason
lies in the fact that the likelihood function of an infinite mixture
of distribution can be unbounded, hence resulting in
inconsistencies. However, by making some assumptions about the
parameters of this model, it is possible to accurately estimate
them. In the following, we present the approach
of~\cite{BalTor83,BalTor85}. 

Therein, the authors present a simplified version of a jump-diffusion
process by assuming that, if the jumps occurrence rate is small, then
during a sufficiently short time period only a single jump can
occur. Accordingly, for small values of $\lambda \Delta t$, $\Delta
P(t)$ can be approximated by a Bernoulli distribution of parameter
$\lambda\Delta t$, and the density of $\Delta Y(t)$ can then be
written as 
\begin{equation}
f_{\Delta Y} (y) = (1 -\lambda \Delta t) f_{\Delta D}(y) + \lambda
\Delta t ( f_{\Delta D} \star f_Q) (y),
\label{eqn:dens}
\end{equation}
where $f_{\Delta D}$ denotes the probability density function of the
diffusion part (including the drift), $f_Q$ the probability density function of the jump
intensity, and $\star$ denotes the convolution operator. As mentioned
in~\S\ref{subsec:jumpmodels}, $f_{\Delta D}$ follows
a normal law with mean $(\mu - \sigma^2/2) \Delta t $ and variance
$\sigma^2 \Delta t$. If $f_Q$ is distributed according to a normal
law statistically independent of the diffusion part, then the
convolution $f_{\Delta D} \star
f_Q$ of $f_{\Delta D}$ and $f_Q$ is normal with mean $(\mu -
\sigma^2/2) \Delta t + \mu_Q$ and variance $\sigma^2 \Delta t +
\sigma^2_Q$. 

For a sequence of observed log-returns $\Delta y_1,\dots, \Delta y_T$,
the log-likelihood $\log\mathcal{L}$ of the model parameters
$\bm{\theta} = (\lambda, \mu, \sigma, \mu_Q, \sigma_Q)^\mathsf{T}$
is obtained in a straightforward manner from~\eqref{eqn:dens} as
\begin{equation}
\log\mathcal{L}(\bm{\theta}\,|\,\Delta y_1, ..., \Delta y_T) (y) =
\sum_{t=1}^{T} \log f_{\Delta Y} (\Delta y_t\,|\,\bm{\theta}),
\label{eqn:loglikelihood}
\end{equation}
and the maximum likelihood estimator $\bm{\hat{\theta}}$ is obtained
by maximizing~\eqref{eqn:loglikelihood}. \cite{Kie78} has
shown that there may exist several local minima in such a mixture
setting, a fact that we have also observed in the experimental setup
that is the subject of the next section. 

While fitting the Ball-Torous model to real-world data
(see~\S\ref{sec:app} for more details and explanations about our
experimental setup) using~\eqref{eqn:loglikelihood}, we have
figured out that the assumption $\lambda\Delta t \ll 1$ might easily be
violated in practice. Indeed, we have observed on our data that, for
$\Delta t = \frac{1}{252}$ (i.e., $\Delta t$ representing one day in a
252-day trading year), the best obtained estimation for $\lambda$
ranged into the interval $[1, 252]$. This means that the value
$\lambda\Delta t$ was often nearer to $1$ than to $0$, and this
obviously questions the validity in practice of the assumption made
by Ball and Torous, at least in our experimental setup. In the
following, we propose a new methodology revolving around relaxing this
assumption to a milder one, namely that $\lambda\Delta t< 1$, and we
justify its use. 

\subsection{Our Methodology}
\label{subsec:our_methodology}
The methodology we describe in this section can be interpreted as an
extension of the work of~\cite{BalTor83}. In this
approach, the authors make the assumption that $\lambda \Delta t$ is
small, or in other words, that the expected number of jumps per
$\Delta t$ period is very small. However, in practice, this assumption
might not always be satisfied, as we observed it on our data. We
propose to relax this assumption and to replace it by the milder one
$\lambda \Delta t < 1$. For $\Delta t = 1/252$, assuming $252$ trading
days in a year, this translates to $\lambda < 252$: concretely, it
means that on average, there is no more than a single jump per day or,
equivalently, no more than 252 jumps per year. We easily agree that
this milder assumption may seem arbitrary at first sight. However, we
show in~\S\ref{subsec:charac} that it is not constraining at all,
since one can easily prove that a jump-diffusion process converges
in distribution to a pure diffusion process for increasing values of
$\lambda$. To summarize, with our methodology, we are able to fit any
jump diffusion without having any strong restriction on $\lambda$,
which is a major improvement in comparison to the work
of~\cite{BalTor83}.

Obviously, the reason why Ball and Torous decided to limit the jump
rate to a small value was to capture the apparition of brutal and rare
events. As a matter of fact, it is clearly more desirable on a
practical point of view to be able to model rare and large downside
market movements than frequent and small ones. This assumption implies
small $\lambda\Delta t$ values, and consequently, it means that the
occurrence of more than a single jump per $\Delta t$ period is
sufficiently unlikely that it can be neglected. 

However, we have experimentally figured out that the semivariance
computed on our relaxed model, i.e., allowing also frequent and small
jumps, may result in significantly higher values than on the model
assuming that $\lambda \Delta t$ is small (see
Figure~\ref{graph:sds-jump}). In other words, we observed that the Ball
and Torous model seems to underestimate the downside risk, compared to
our relaxed model, at least on our data set. A direct consequence of
not limiting the jumps occurrence rate $\lambda \Delta$ to a small
values is that the probability of having more than one jump during a
single day may become not negligible when $\lambda$ is large enough.
By allowing more than a single jump per $\Delta t$ period, one can
also take into account the fact that several bad news might influence
the market during a trading day, i.e., during a $\Delta t$
period. Obviously, if $\Delta t$ becomes sufficiently small, maybe as
small as a single second, it would be more difficult to justify in
practice several jumps during a time interval. However, when
discretizing the process in periods as large as a single day, we are
convinced that allowing more than a single jump better reflects the
reality. 

Consequently, in the following, we will assume that up to $m \geq 1$
jumps are possible during a time interval of $\Delta t$, where
$m\in\mathbb{N}$ is a \emph{finite} value. The probability
distribution of the possible number $k$ of jumps that
may occur during a time interval $\Delta t$ is then the following:
\begin{equation}
\label{eq:pk_values}
p_k = \frac{e^{-\lambda  \Delta t} (\lambda \Delta t)^k}{k!} \text{
  for } k = 0,...,m-1 \text{ and }
p_m = 1-\sum_{k=0}^{m-1} p_k.
\end{equation}
The resulting probability distribution, that we denote by
$\tilde{f}_{\Delta Y} (y)$ and which can be compared
to $f_{\Delta Y} (y)$ in~\eqref{eqn:dens}, can be expressed as 
\begin{equation*}
  \tilde{f}_{\Delta Y} (y) =  p_0f_{\Delta D}(y)+\sum_{k=1}^{m} p_k
  \left( f_{\Delta D} \star f_{Q^{(k)}}\right) (y),
\end{equation*}
where $f_{Q^{(k)}}$ denotes the convolution of $k$ density functions
$f_Q$. 

In what follows, we propose ourselves to look at the approximation
error of replacing a Poisson law by a truncated one. For a random
variable following a Poisson law of parameter $\lambda$, the
probability of obtaining a value strictly larger than $m$ is given by
the following function $f(\lambda)$:
\begin{equation}
\label{eqn:poisson}
  f(\lambda)  =   1 - \sum_{k=0}^{m} p_k\text{ and } p_k =
  \frac{e^{-\lambda} \lambda^k}{k!} \text{ with } k = 0,\dots,m.
\end{equation}
It is easy to see that $f(\lambda)$ is an increasing function in
$\lambda$. Indeed, the derivative of $f(\lambda)$ is given by
\begin{equation*}
f'(\lambda)  =   - \sum_{k=0}^{m} \frac{d\,p_k}{d\lambda}.
\end{equation*}
Then, we have $\frac{dp_k}{d\lambda} = - p_k +
p_{k-1}$ for $k = 0,\dots,m$, since 
\begin{equation*}
  \frac{dp_k}{d\lambda} = \frac{d}{d\lambda} \frac{e^{-\lambda}
    \lambda^k}{k!} = - \frac{e^{-\lambda}\lambda^{k}}{k!}  +
  \frac{ke^{-\lambda}\lambda^{k-1}}{k!} = -p_k + p_{k-1},
\end{equation*}
and where $p_{-1}$ is set to $0$ as
\begin{equation*}
\frac{dp_0}{d\lambda} = \frac{d}{d\lambda} e^{-\lambda} =
-e^{-\lambda} = - p_0.
\end{equation*}
Following this observation and telescoping the sum, the derivative
$f'(\lambda)$ can be rewritten as $f'(\lambda)   =   - \sum_{k=0}^{m}
\frac{dp_k}{d\lambda}  =   p_m > 0$. Now, if we substitute $\lambda$
by $\lambda \Delta t$ and assuming $\lambda \Delta t < 1$, then we can
observe that the supremum in~\eqref{eqn:poisson} is obtained for $\lambda  \Delta t= 1$. We
conclude that, assuming that $\Delta t < 1$, the probability of having
more than $m$ jumps is upper-bounded by the following expression:
\begin{equation}
\sum_{k=m+1}^{+\infty} \frac{e^{-1}}{k!} = 1 - \sum_{k=0}^{m}
\frac{e^{-1}}{k!}
\label{eqn:error}
\end{equation}
Table~\ref{table:bounds} gives a numerical upper bound for the
probability of obtaining values strictly larger than $m$ for different
values of $m$ when $\lambda \Delta t < 1$.
\begin{table}[t]
\begin{center}
\begin{tabular}{|l|c|r|}
  \hline
  $m$ & Upper Bound\\
  \hline
  1& 0.264 \\
  2& 0.080\\
  3& 0.019\\
  4& 0.003\\
  5& 0.001\\
  \hline
\end{tabular}
\end{center}
\caption{Upper bounds for the probabilities of obtaining more than $m$
  jumps for a Poisson law with $\lambda\Delta t < 1$.}
\label{table:bounds}
\end{table}
Based on the previous results, we conclude that ``truncating'' the
Poisson random variable combined with our assumption that $\lambda
\Delta t < 1$ gives a very good approximation of the return
distribution. Assuming that $t = n \Delta t$ and that no more than $m$
jumps can happen in a $\Delta t$ interval, we can derive a very accurate
approximation for the semivariance given
in~\eqref{eqn:semivarianceformula} by simply considering the first
$(mn+1)$ terms of~\eqref{eqn:semivariance} given in
Theorem~\ref{the:semivariance}:
\begin{equation}
\sum_{k=0}^{mn} \frac{e^{-\lambda  t} (\lambda t)^k}{k!} \left(
  (D-\mu_k)^2 \Phi(D_k) + \sigma_k (D-\mu_k) f(D_k) +\sigma_k^2
  \Phi(D_k)  \right).
\label{eqn:semivariance_trunc}
\end{equation}
For a Poisson process, the expected value
and the variance are equal to $\lambda$ for $t$ = 1 year. In other
words, if $\lambda < 252$ and assuming that $m = 5$, it means that we
consider the first $5 \times 252 + 1 = 1261$ terms of the
expression. Concretely, it means that we are ignoring events occurring
at more than $\frac{1260-252}{\sqrt{252}} \approx 63$ standard
deviations from the average of the Poisson process; a new time, this
implies that the approximation given in~\eqref{eqn:semivariance_trunc}
is an almost exact formula. 

\subsection{Discussion on the Assumption about $\lambda$}
\label{subsec:charac}
So far, we assumed that $\lambda \Delta t <
1$, a choice that might appear arbitrary at first sight. The purpose
of this section is to show that this assumption is actually not a
strong limitation. If we assume that $\Delta t = 1/252$ (1 day) and
that $\lambda \Delta t \geq 1$, or equivalently, that $\lambda \geq
252$, then its means that, on average, we have more than 1 jump a
day. In that case, the jump returns are small and often their order of
magnitude is less than the order of magnitude of a typical daily
return. In this particular situation, the effect of the jump returns
is hard to distinguish from the effect of the diffusion part of the
process. In other words, it is difficult to make the distinction between an
abnormal return coming from a jump or from the diffusion process. This
issue becomes even more evident when we have a look at the annualized
return distribution. When $P(t)$ is sufficiently large, then we can
invoke the central limit theorem to approximate $Q =
\sum_{k=1}^{P(t)}Q_k$, as the $Q_k$'s are iid random variables with
finite mean and variance. The mean and the variance of a compound
Poisson distribution derive in a simple way from the laws of total
expectation and of total variance. Formally, let us denote by
$\mathrm{E}_X[X]$ and $\mathrm{Var}[X]$ the expected value and the
variance of a random variable $X$, respectively. Furthermore, let
$\mathrm{E}_{X|Y}[X|Y]$ denote the conditional expectation of the
random variable $X$ conditioned by $Y$. Note that
$\mathrm{E}_{X|Y}[X|Y]$ is a random variable, and therefore, one
can compute its expected value. The law of total expectation tells us
that $E_Y\left[E_{X|Y}\left[X|Y\right]\right] = E_X[X]$, while the law
of total variance is formulated as $\mathrm{Var}\left[Y\right] =
  \mathrm{E}_X\left[\mathrm{Var}[Y|X]\right] +
  \mathrm{Var}_X\left[\mathrm{E}_{Y|X}[Y|X]\right]$.

Let us remind that in our jump-diffusion process, the amplitude $Q$ of a
single jump is assumed to follow a normal distribution with mean
$\mu_Q$ and standard deviation $\sigma_Q$, respectively, and that the
number of jumps $P$ in a given interval is modeled by a Poisson law
of parameter $\lambda \Delta t$. Finally, let us remind that $Q$ and
$P$ are independent random variables. We have
\begin{equation*}
\mathrm{E}[Q] = \mathrm{E}_P \left[ \mathrm{E}_{Q|P} [Q|P] \right] =
\mathrm{E}_P \left[P\cdot\mathrm{E}_Q [Q] \right] =
\mathrm{E}_P[P]\cdot \mathrm{E}_Q[Q] = \lambda t \mu_Q
\end{equation*}
as well as 
\begin{eqnarray*}
\mathrm{Var}[Q] &=& \mathrm{E} \left[ \mathrm{Var}_{Q|P} [Q] \right] +
\mathrm{Var} \left[ \mathrm{E}_{Q|P} [Q] \right]
= \mathrm{E} \left[ P\cdot \mathrm{Var}(Q) \right] + \mathrm{Var}
\left[P\cdot \mathrm{E} [Q] \right] \\
&=& \mathrm{E}[P]  \mathrm{Var}[Q] +  (\mathrm{E}[Q])^2
\mathrm{Var} [P]  
= \lambda t \sigma_Q^2 + \mu_Q^2 \lambda t = \lambda t ( \sigma_Q^2 +
\mu_Q^2)
\end{eqnarray*}
Based on the previous development, we can conclude that the jump
diffusion process converges in distribution to a normal distribution
when $P(t)$ becomes large:
\begin{equation}
Y_t \sim \mathcal{N}\left( (\mu-\frac{\sigma^2}{2}+ \lambda \mu_Q) t,  (\sigma^2 +
  \lambda(\sigma_Q^2+\mu_Q^2))t \right) .
\label{eqn:logret2}
\end{equation}
In short, the annualized return distribution converges to the return
distribution that we would obtain for a pure diffusion process but
with different drift and volatility parameters. Concretely, when
$\lambda \Delta t$ is larger than 1, then the interest of such a model
is limited and the use of a pure diffusion process is preferable at
least in our context.  



\section{Applications}
\label{sec:app}
In order to practically illustrate our results, we have chosen to
focus on the \emph{Barclays US High Yield Bond Index}
between January 3rd, 2007 and July 31st, 2012. We explain in Appendix~\ref{app:NAV}
why we can use a jump-diffusion process to model a bond benchmark index.
The daily log-returns (see Figure~\ref{fig:barclays}) of this index form a good example of a
highly-skewed, long-tailed distribution with a negative sample
skewness of $-1.63$ and a sample kurtosis of $24.08$. An histogram
approximating the probability density function of the log-returns, as
well as a normal law whose parameters are the log-returns sample mean
$\hat{\mu} \approx 0.033\%$ and standard deviation $\hat{\sigma}
\approx 0.42\%$ are depicted in Figure~\ref{fig:histogram}.  

The semivariance of the log-returns can easily be computed with help
of~\eqref{eq:TSV}. However, it is not clear at all how we should
proceed to annualize this semivariance, except in one case: if we
assume that the log-returns follow a pure diffusion process, with no
drift, and if the threshold $\tau$ is set to zero, then the variance
of the process increases linearly with time, and the annualized
semivariance is just the daily semivariance multiplied by the square
root of time. However, as soon as we introduce jumps in the stochastic
process, there is no reason to apply this rule anymore. Indeed, we
have derived in~\eqref{eqn:semivariance} an exact formula that
enables us to accurately annualize a daily semivariance into an
annualized semivariance. 

The purpose of this section consists in performing a rolling analysis
and to compare the semivariances that we obtain from three
distinct methodologies:
\begin{enumerate}
\item an approach relying on fitting our data to a jump-diffusion
  process, and computing an annualized semivariance according to
  our generalized Ball-Torous model
  and~\eqref{eqn:semivariance_trunc}, which we have
  shown to be an almost exact approximation
  of~\eqref{eqn:semivariance};
\item an approach relying on fitting our data to a pure diffusion
  process (i.e., without jumps), and computing an annualized
  semivariance according to~\eqref{eqn:pure};
\item and the square-root-of-time rule.
\end{enumerate}

The objective of this analysis is to experimentally determine which
one of these three methodologies seems to provide the best results. In
our computations, all the semivariances are based on 1 year of
historical data, that is, 252 points, and we perform a rolling
analysis over the whole period. We also have decided to set $\tau=0$
in all of our tests, since the use of the square-root-of-time is
only justified for this particular value.\footnote{Note that the
  formulas derived in \S\ref{sec:explicitform} and
  \S\ref{sec:generalization} can be applied to any value of $\tau$.}
The semivariance in $t$ is computed as follows: first, we compute the 
daily semivariance based on historical data from $t-251$ days to $t$;
then, we transform the daily semivariance to obtain an annualized
semivariance by replacing $t$ by $\Delta t$
in~\eqref{eqn:semivariance_trunc} and~\eqref{eqn:pure},
respectively, or by applying the square-root-of-time rule. 

This section is organized as follows: in~\S\ref{subsec:DEA}, we
discuss our parameter fitting procedure, that rely on the use of a
differential evolution algorithm. We first recall the salient features
of this type of optimization procedures, and we describe the
difficulties encountered, mainly in terms of stability of the obtained
results. Then, in~\S\ref{subsec:discussion_results}, we describe and
discuss the experimental results obtained while computing annualized
semivariances according to the three different approaches quickly
described above. 
\begin{center}
  \begin{figure}
    \includegraphics[width=0.9\linewidth]{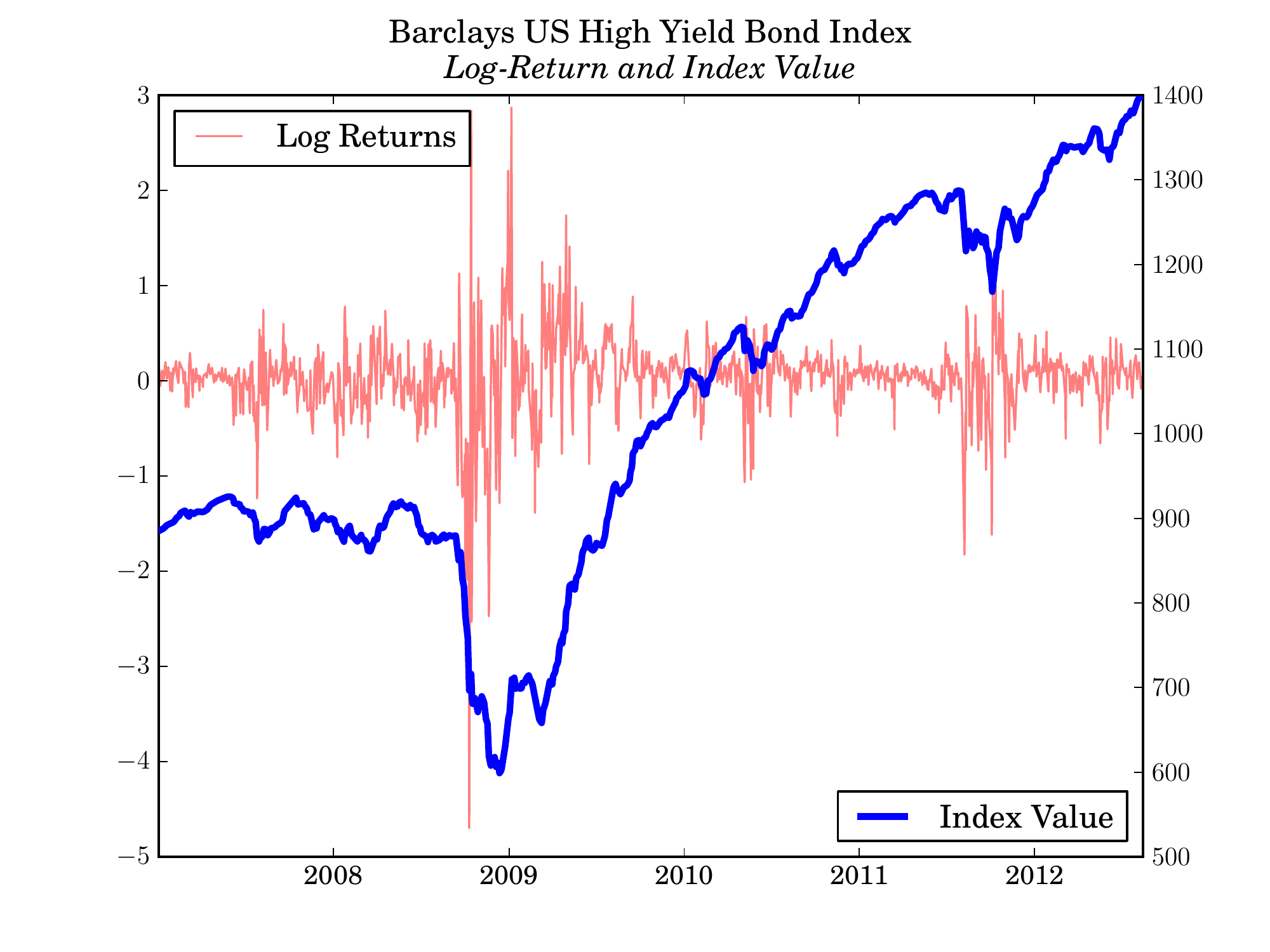}
    \caption{\label{fig:barclays}Barclays US High Yield Bond Index --
      Daily log-returns (in \%) and index value from January 3rd, 2007 and July
      31st, 2012.}
  \end{figure}
\end{center}
\begin{center}
  \begin{figure}
    \includegraphics[width=0.9\linewidth]{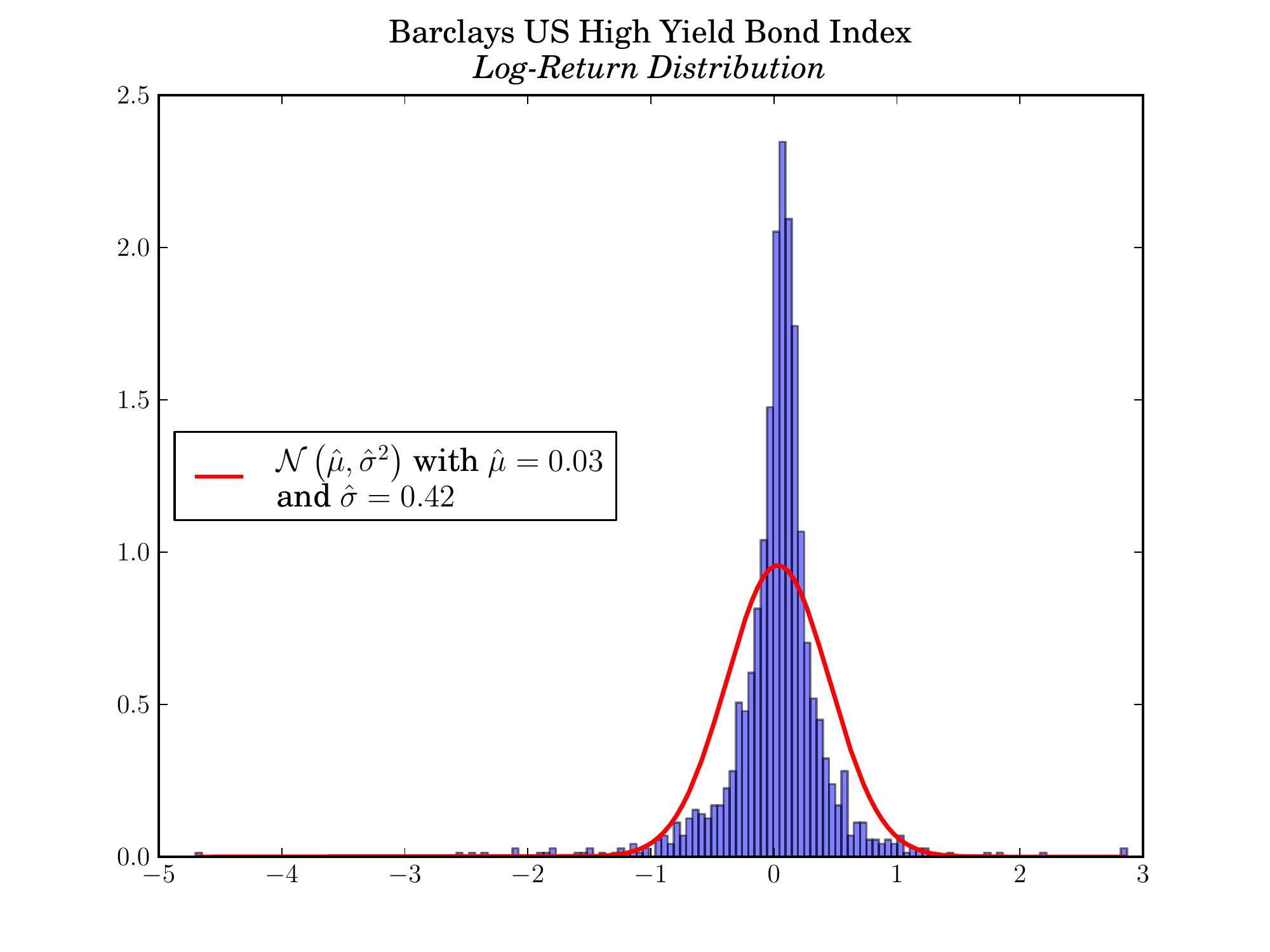}
    \caption{\label{fig:histogram}Histogram of the daily returns. The
      returns show negative sample skewness equal to $-1.63$ and a
      sample kurtosis equal to $24.08$. The sample mean $\hat{\mu}$
      and standard deviation $\hat{\sigma}^2$ are in \%.}
  \end{figure}
\end{center}

\subsection{Optimization with Differential Evolution}
\label{subsec:DEA}

The main challenge that we faced from an optimization point of view is
the fact the objective function~\eqref{eqn:loglikelihood} possesses
several local optima. To overcome this issue, we have decided to use
the optimizer described in~\cite{ArdMul13}. The main motivations
for us to choose it is its easiness to use, its capacity to handle
local optima and its ability to manage constraints about the
parameters. Moreover, it was successfully tested in~\cite{ArdOspGir11b}
on diffusion problems. Finally, this kind of algorithm has proved to
be flexible enough for us to customize it for the rolling analyses
that we are dealing with in this paper. We call the variant we will
describe later \emph{differential evolution algorithm with memory}. 

Optimization methods based on Differential Evolution (DE) is a 
class of search algorithms introduced by~\cite{StoPri97} that belongs
to the class of evolutionary
algorithms. These algorithms assume no special mathematical property
about the target function, like differentiability for gradient
descents or quasi-Newton methods. It means that DE can handle
non-continuous or very ill-conditioned optimization problems by
searching a very large number of candidate solutions; we note that
this number of candidate solutions can be fitted to the computational
power one has at disposal. The price to pay on this lack of assumption
is that optimization algorithms relying on DE are never guaranteed to
converge towards an optimal solution. However, such meta-heuristics
still seem to deliver good performances on continuous problems, see
e.g.~\cite{PriStoLam05}. 

Roughly speaking, DE-based optimization algorithms exploit
bio-inspired operations such as crossover, mutation and selection of
candidate populations in order to optimize an objective function over
several successive generations. Let $f: \mathbb{R}^d\longrightarrow
\mathbb{R}$ denote the $d$-dimensional \emph{objective function} that
must be optimized. For the sake of simplicity, we assume that $f$ has
to be minimized; this is not restrictive, as $f$ can easily be
maximized by minimizing $-f$. The DE algorithm begins with a
population of $v$ initial vectors $\bm{x}_i \in \mathbb{R}^d$, with $1
\leq i \leq v$, that can either be randomly chosen or provided by the
user. For a certain number of rounds, fixed in advance, one iterates
then the following operations: for each vector $\bm{x}_i \in
\mathbb{R}^d$, one selects at random three other vectors $\bm{x}_a$,
$\bm{x}_b$ and $\bm{x}_c$, with $1 \leq a, b, c, i \leq t$ being all
distinct, as well as a random index $1 \leq \rho \leq d$. Then,
the successor $\bm{x}_i^\prime$ of $\bm{x}_i$ is computed as follows:
given a \emph{crossover probability} of $\pi \in [0, 1]$, a
\emph{differential weight} $\omega \in [0, 1]$ and $\bm{x_i} =
(x_{i,1}, \dots, x_{i,d})^\mathsf{T}$, for each $1 \leq j \leq d$, one draws a
number $r_j$ uniformly at random. If $r_j  < \pi$, or if $j = \rho$,
then one sets $x_{i,j}^\prime = x_{a, j} + \omega\cdot(x_{b, j} -
x_{c,j})$; otherwise, $x_{i, j}^\prime = x_{i,j}$ keeps
untouched. Once this mutation done, the resulting vector
$\bm{x}^\prime$ is evaluated with respect to the objective function,
and replaces $\bm{x}_i$ in the candidate population if it is
better. Otherwise, $\bm{x}^\prime_i$ is rejected. For more details, we
refer the reader to~\cite{PriStoLam05} and~\cite{StoPri97}.
For practical purposes, several implementations of DE-based
optimization algorithms are currently available, as illustrated by the
web-based list of DE-based software for general purpose optimization
maintained by~\cite{storn-website}. In what follows, we rely on
the package \texttt{DEoptim},
see~\cite{ArdMul13,MulArdGilWinCli11,ArdBouCarMulPet11}, which
implements DE-based optimization in the R language and environment for
statistical computing, see \cite{r-lang}. 

For the optimizer, we have used a maximum number of iterations of
$250$, default values for the crossover probability of $\pi = 0.5$ and
for the differential weight $\omega = 0.8$,
as well as an initial population of $v=200$ vectors. We also have
constrained the value of $\lambda$ to the interval $[0,252]$ to be
compliant with the assumption $\lambda \Delta t < 1$. The main
challenge that we have faced for our rolling analysis is the lack of
stability over time of some parameters of the objective function. This
is illustrated in the upper graph of Figure~\ref{graph:stab1}, that
depicts the values computed by the DE-based optimizer for
$\lambda$.\
\begin{center}
  \begin{figure}
    \begin{center}
      \includegraphics[width=\linewidth]{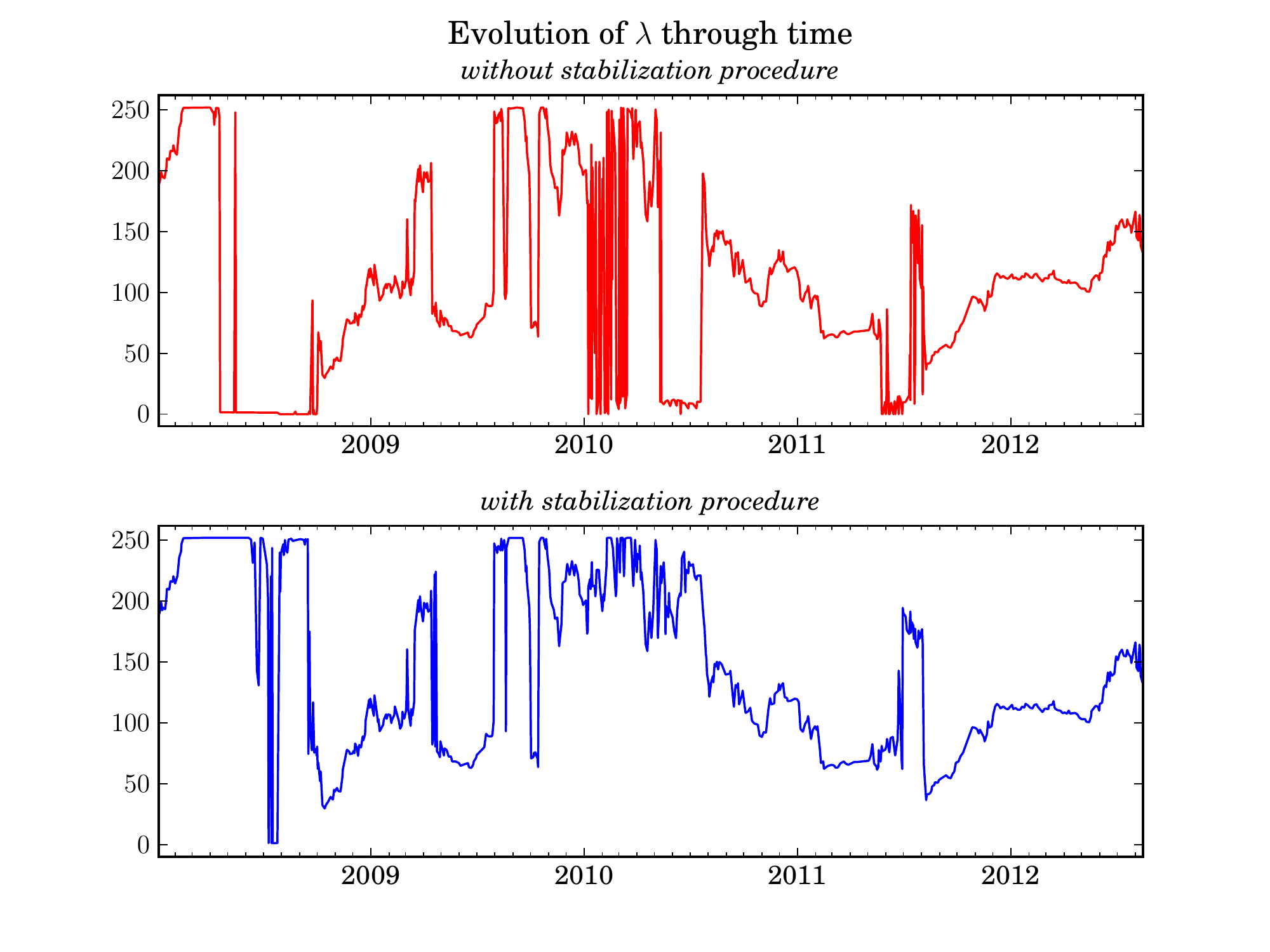}
    \end{center}
    \caption{Evolution of $\lambda$ without and with the stabilization
      procedure. The stabilization procedure relies on feeding the initial
      population with the 50 last solutions, if available.}
    \label{graph:stab1}
  \end{figure}
\end{center}

We can observe that this lack of stability can be extreme for certain
periods of time. For instance, in the first half of 2010, we can see
that $\lambda$ jumps from a high value near to its upper bound $252$
to a very low value the next date. This is due to the fact that
objective function possesses several optima and that for
each of these solutions, $\lambda$ can be very different. 

In order to overcome this issue, we propose to add ``memory''
to the algorithm. Indeed, for each date, we compute a solution that
maximizes the log-likelihood~\eqref{eqn:loglikelihood}. Our idea
consists in storing these solutions and to feed the initial population
with the last 50 solutions of the optimization problem. Concretely, to
determine the solution at a specific date, the initial population --
whose size is $v=200$ -- is fed with the solutions of the
optimization problem at times $t-1,t-2,\ldots,t-50$. There is no reason
why the solution at time $t$ is the same as time $t-1$, except if the
objective function remain the same; however, it is very likely that
the solution at $t-1$ will provide a good starting point for the
current optimization problem if the objective function has changed,
and directly an optimal solution if the objective function has not
changed. With this procedure, we also have the guarantee that the
solution for the date $t$ will not switch from one point to another
with the same log-likelihood value at the next optimization problem of
date $t+1$ if the objective function has not changed. The effect of
this stabilization procedure are illustrated in the lower graph of
Figure~\ref{graph:stab1}. The stabilization procedure is very important
for the interpretation of our results. Without this stabilization
procedure, it would be far more difficult and even impossible to
interpret the results for some periods of time.  

\subsection{Discussion of our Results}
\label{subsec:discussion_results}

\begin{center}
  \begin{figure}
    \begin{center}
      \includegraphics[width=0.9\linewidth]{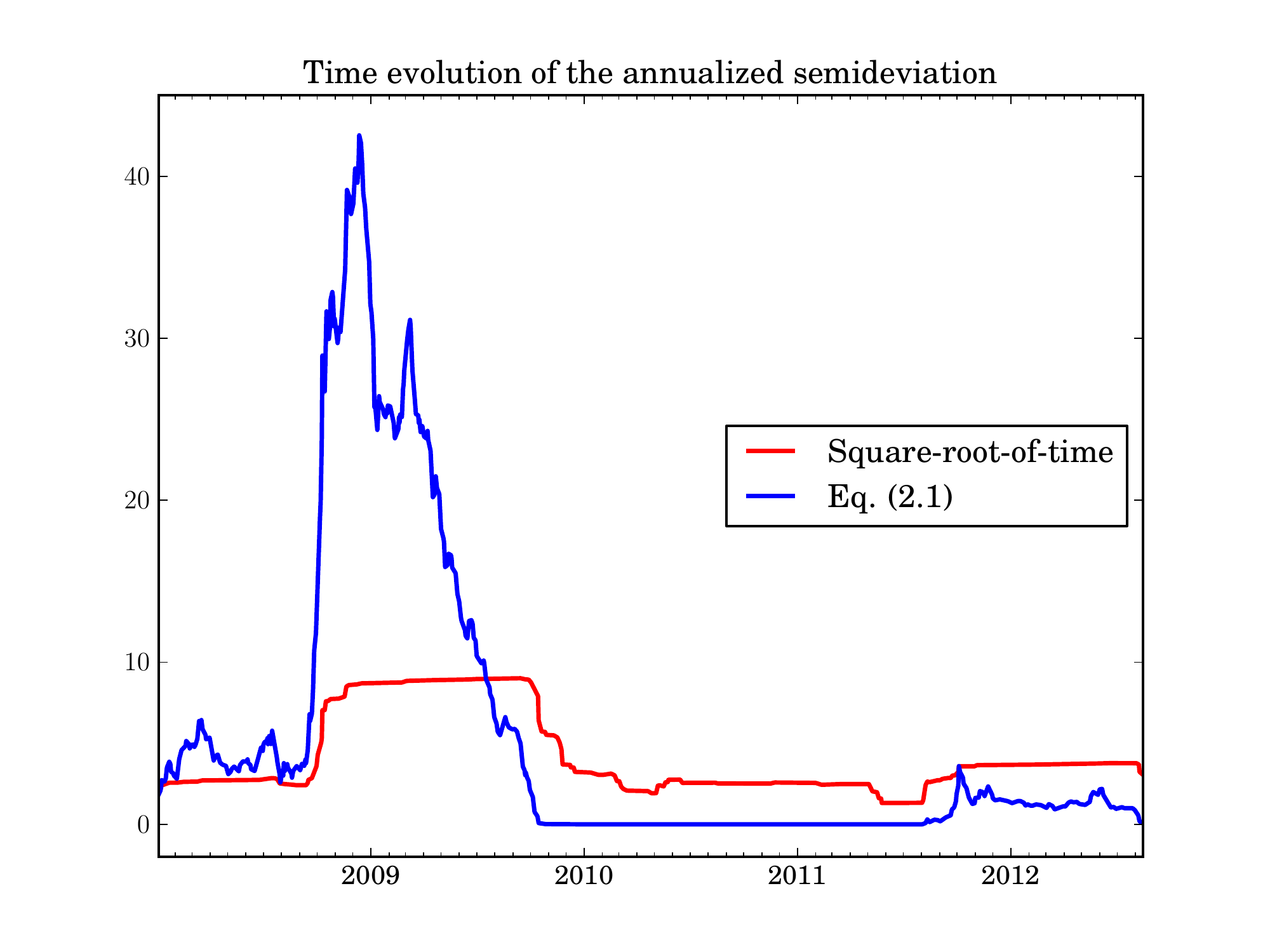}
    \end{center}
    \caption{Comparison of the semideviation (in \%) obtained via the
      square-root-of-time rule with the semideviation (in \%) based on a
      jump-diffusion process. We can observe that the semideviation
      based on the square-root-of-time rule significantly underestimates
      the risk in periods of high volatility and overestimates the
      risk in periods of lower stress.}
    \label{graph:res}
  \end{figure} 
\end{center}
We start by comparing in Figure~\ref{graph:res} the evolution
of the annualized semideviation, i.e., the square root of the
semivariance, computed thanks to the square-root-of-time rule to the 
semideviation computed through fitting a jump-diffusion process to the
data under the constraint $\lambda \leq 252$ on which we applied our
formula given in~\eqref{eqn:semivariance_trunc}. It is quite
striking to see that the risk estimated on the jump-diffusion process
is up to 4 times larger than the one computed thanks to the
square-root-of-time rule. At the beginning of 2009, the first one is
larger than $40\%$ while the other is just below $10\%$. The risk is
not only underestimated in period of crisis, but also seems to be
overestimated when the market rallies or in period of low or medium
stress. Indeed, we can observe that the jump-diffusion semideviation
is almost zero from the end of 2009 to the end of 2011, while the
square-root-of-time semideviation remains at a level around $2\%$
during that period.
\begin{center}
  \begin{figure}
    \begin{center}
      \includegraphics[width=0.9\linewidth]{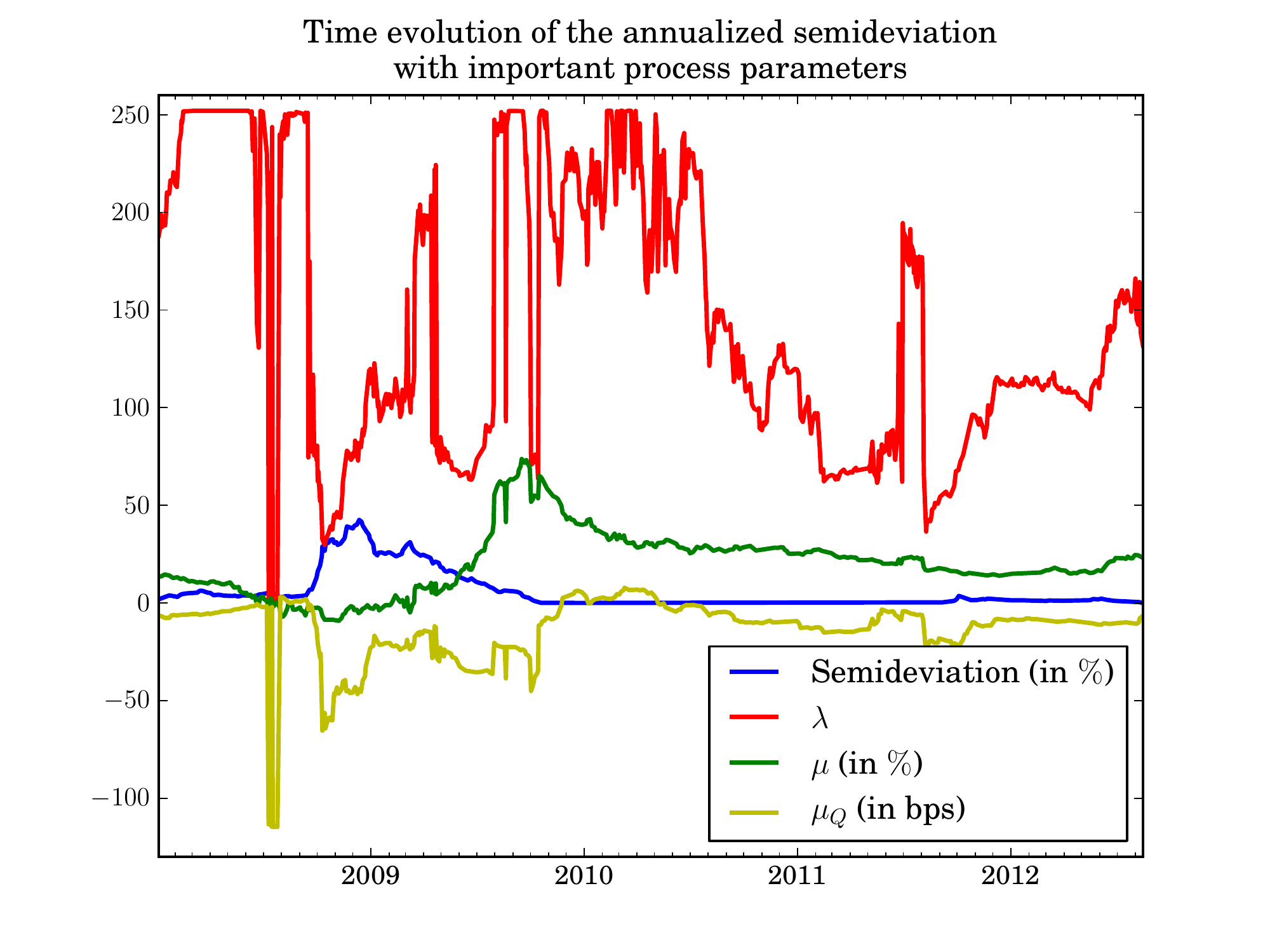}
      \caption{Time evolution of the jump-diffusion process parameters}
      \label{graph:params-interpretation}
    \end{center}
  \end{figure}
\end{center}

\subsubsection{Relationship between Important Process Parameters}

To better understand how the jump-diffusion semideviation works, it is
very important to understand the relationship between the different
process parameters. Figure~\ref{graph:params-interpretation}
illustrates the evolution through time of some important parameters
of the jump-diffusion process, namely $\lambda$, $\mu$ and $\mu_Q$. 
We can observe a direct relationship between $\lambda$ and
$\mu_Q$, namely that when $\lambda$ gets small, then $\mu_Q$ gets
large (in absolute value). This behavior is quite easy to interpret:
in normal market conditions, $\lambda$ is rather high, while $\mu_Q$
is rather low. The combination of these two parameters under normal
market conditions captures a well-known effect in credit portfolios, a
negative asymmetry in their return distribution. 

The parameter $\mu$ has an offsetting effect counterbalancing the
compound Poisson process. For example, let us assume that $\lambda =
252$ and that $\mu_Q$ has an average size of 1 basis point. In that
case, the cumulative annual impact of the accidents driven by the
Poisson process is on average a negative return of $252 \times 1 =
252$ basis points. Let us imagine that the annual return that we can
expect from our investment is $7\%$, then it is very likely that the
statistical estimation of $\mu$ would be
around $10\%$ (note that from~\eqref{eqn:logret}, we also have to
take into account the role of $\sigma$ to determine the annual
expected return). In crisis situation, $\lambda$ really captures the
frequency of extreme events and its value typically drops from 252 to
smaller values, sometimes close to 1. By contrast, $\mu$ is by
construction less sensitive to these extreme accidents, even though we
can see that it is impacted, by looking at
Figure~\ref{graph:params-interpretation}. Indeed, we can observe that
$\mu$ has dropped to negative values during the 2008 financial crisis. 

It is also quite interesting to have a look at what happened during
the rally phase that started in March 2009. We can observe that $\mu$
has peaked to values larger than $60\%$. It is worth
mentioning that even though the value $\mu$ has been very large
at the end of 2009 due to a strong rally in the market, $\mu_Q$ stayed
negative during that phase. It is only in year 2010 that $\mu_Q$ had
positive values. This situation is quite exceptional, as we are used
to negative skewness in the returns for credit portfolios. However,
under some very particular circumstances, they can exhibit 
positive skewness and the values of $\mu_Q$ cannot be interpreted any
more as losses, but as gains. 

Note also that the jump-diffusion semideviation can be very close to
zero in very special cases, and in particular when $\mu$ is much
larger than $\lambda \cdot \mu_Q$, when $\mu_Q$ is negative or in
the situation where both $\mu$ and $\mu_Q$ are positive. 
This just means that in this kind of situation, the probability of
obtaining a annualized return below zero is very low, resulting in a
semideviation close to zero. 

The two worst crisis periods that the financial markets have
experienced during the last five years are, by order of importance,
the 2008 financial crisis and the US Debt Ceiling crisis in
2011. After weeks of negotiation, President Obama signed the Budget
Control Act of 2011 into law on August 2, the date estimated by the
department of the Treasury where the borrowing authority of the US
would be exhausted. Four days later, on August 5, the credit-rating
agency Standard \& Poor's downgraded the credit rating of US
government bond for the first time in the country's history. Markets
around the world experienced their most volatile week since the 2008
financial crisis with the Dow Jones Industrial Average plunging for
635 points (or $5.6\%$) in a single day. However, the yields from the
US Treasuries dropped as investors were more concerned by the European
sovereign-debt crisis and by the economic prospect for the world
economy and fled into the safety of US government bonds. This is a
reason why we could see the jump-diffusion semideviation increasing
substantially during that period.

\subsubsection{Relationship between $\lambda$ and the Semideviation}
It is worth studying the relationship between $\lambda$ and the
semideviation of the jump-diffusion process. In Figure~\ref{graph:sds-jump}, we give the
evolution of the semideviation based on different values of $\lambda$. 
\begin{center}
  \begin{figure}
    \begin{center}
      \includegraphics[width=0.9\linewidth]{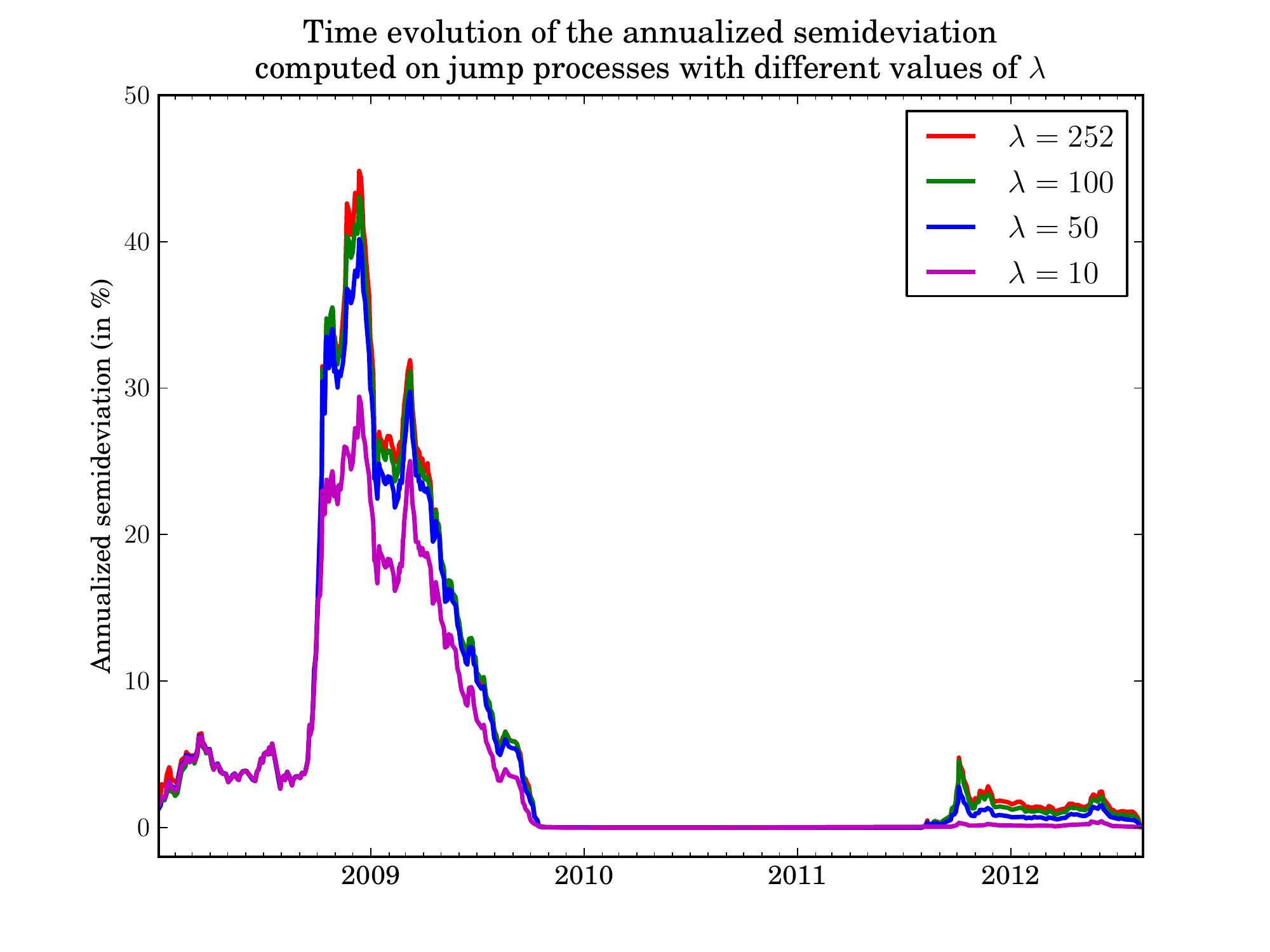}
      \caption{\label{graph:sds-jump}Evolution of the semideviations
        (in \%) for different values of
        $\lambda$. The curve in red corresponds to the constraint $\lambda =
        252$, in green to $\lambda =  100$, in blue to $\lambda =  50$ and
        in violet to $\lambda = 10$. Note that the curves in red and in
        green are almost identical.}
    \end{center}
  \end{figure}
\end{center}

This analysis clearly shows that the assumption made about $\lambda$
has a huge impact on the computation of the semideviation.
In particular, it has been a standard practice since the publication
of the work of~\cite{BalTor83} to use a Bernoulli
jump diffusion process as an approximation of the true diffusion
process by making the assumption that $\lambda \Delta t$ is
small. However, our analysis in Figure~\ref{graph:res} indicates that a
model with a larger $\lambda$ fits best the data. Obviously, the
motivation of Ball and Torous was to capture large and infrequent
events by opposition to small and frequent jumps. However, imposing to
$\lambda \Delta t$ to be small has the undesired consequence of
underestimating the risk in periods of high volatility. Indeed, the
graph in Figure~\ref{graph:sds-jump} shows that the semideviation
obtained with $\lambda = 252$ can be as much as 50 \% larger in a
period of stress than the semideviation computed with the constraint
that $\lambda = 10$. This observation was the main motivation for us to
extend the work of Ball and Torous by allowing models where the
frequency of the jumps is not imposed anymore, but driven by the
data. 

In \S\ref{subsec:charac}, we have shown that when $\lambda$ is large,
then the jump-diffusion return distribution should be close to the
distribution of a pure diffusion process but with different
parameters. Rather than superposing the semideviation computed out of
a jump-diffusion process with $\lambda = 252$ with a pure diffusion
process, we have put in Figure~\ref{graph:sds-jump-pure} the
jump-diffusion semideviation and its \emph{difference} with the
semideviation obtained from a pure diffusion process. 
\begin{center}
  \begin{figure}
    \begin{center}
      \includegraphics[width=0.9\linewidth]{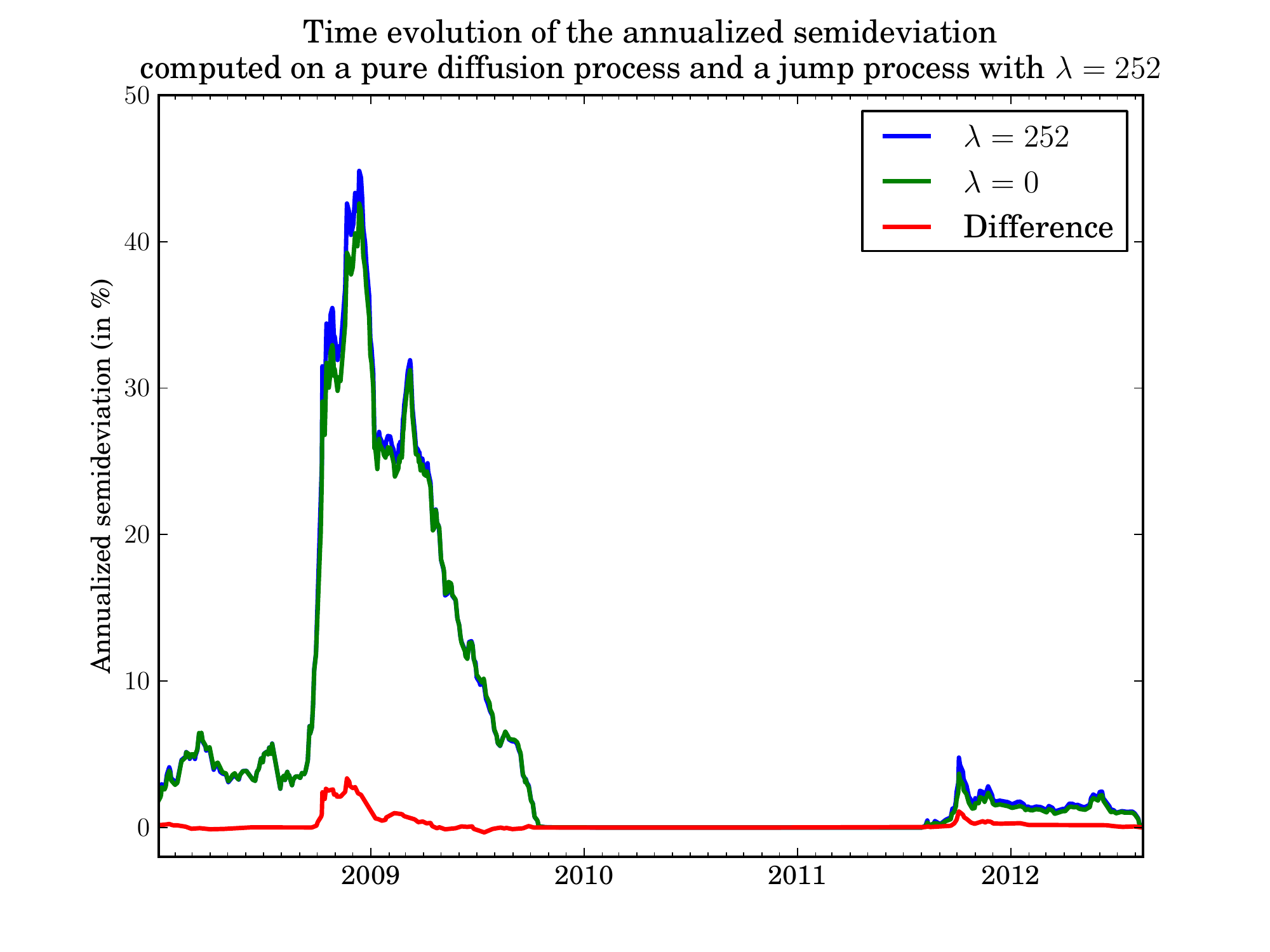}
    \end{center}
    \caption{\label{graph:sds-jump-pure}Comparison between the
      semideviation (in \%) obtained from a jump
      diffusion process with $\lambda = 252$ with a pure diffusion
      process with $\lambda=0$. The blue and green lines represent the
      jump-diffusion and pure diffusion semideviations, respectively, 
      while the red one is the difference between the two.
      Even though the magnitude of the semideviations are
      similar, we can observe that the difference is substantial
      during the 2008 financial crisis. }
  \end{figure}
\end{center}

The difference is small, except during the 2008 financial crisis. During
that period, the difference can be larger by several \%. The central
limit theorem tells us that the annualized return distribution should
converge to a normal distribution whose parameters are given
in~\eqref{eqn:logret2}. 
But the parameters of the pure diffusion process have been determined
by maximization of the log-likelihood and may give results quite
different from the ones in~\eqref{eqn:logret2}. Moreover, we can see
that the two semideviations are diverging in period of crisis, i.e., when it is
difficult to calibrate the models due to the appearance of extreme
events that make the determination of the parameters of the models
quite challenging. 

To illustrate this fact, we have depicted the
evolution of the log-likelihood value for both processes on
Figure~\ref{graph:likelihood}. We have also compared the
jump-diffusion and the Bernoulli jump-diffusion semideviations which
should be the same based on the results
from~\S\ref{subsec:charac}. Our results are displayed on
Figure~\ref{graph:sds-jump-bern}. The difference is negligible over
all the period, meaning that our methodology provides the same results
as the one based on the model of Ball and Torous provided $\lambda$ is
constrained to be small. 
\begin{center}
  \begin{figure}
    \begin{center}
      \includegraphics[width=0.9\linewidth]{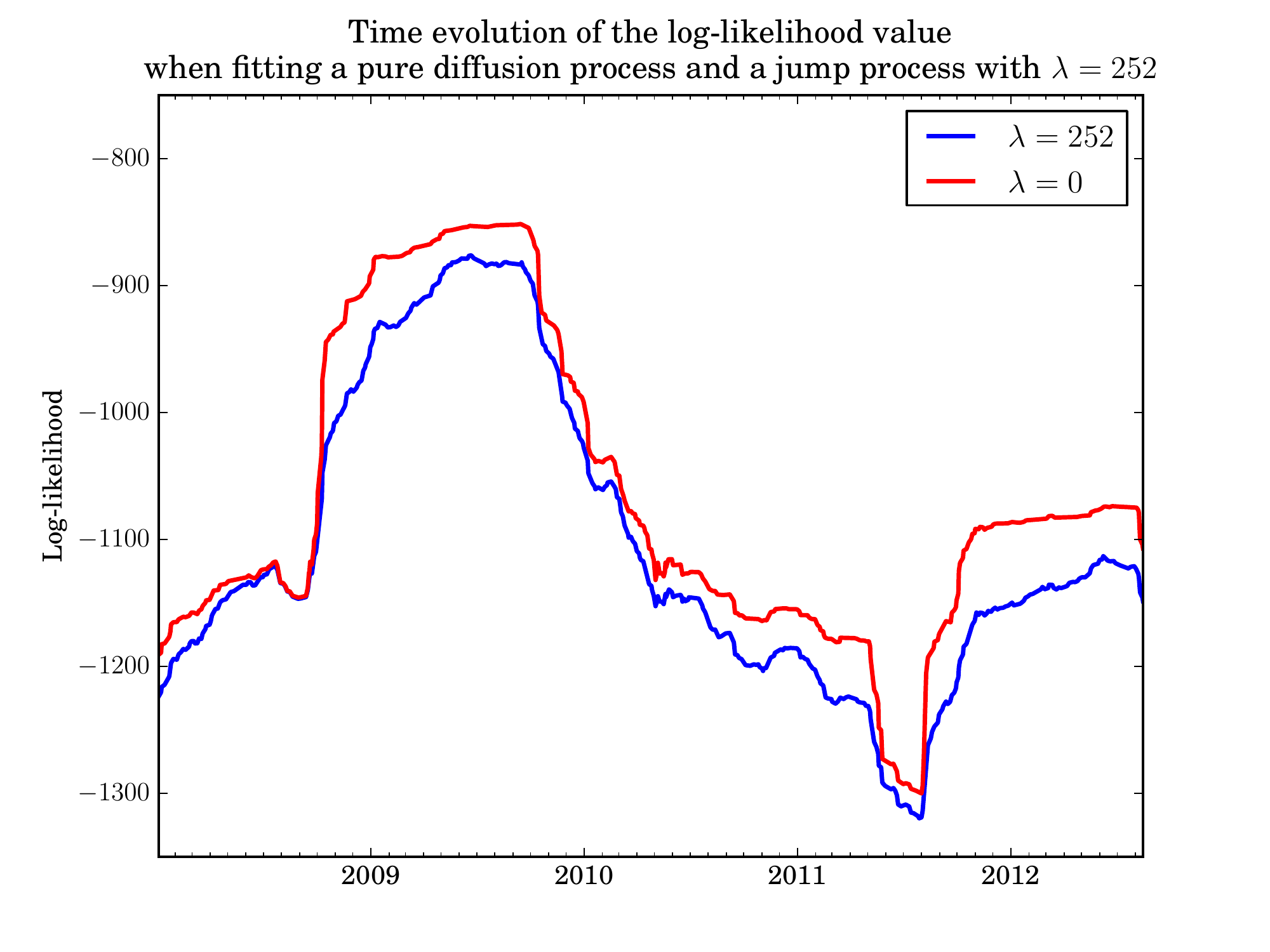}
    \caption{\label{graph:likelihood}Comparison of the log-likelihood
      for the jump-diffusion
      process (in blue) with $\lambda = 252$ and with the pure diffusion
      process (in red). In both cases, we can see that the log-likelihood
      is smaller in periods of high market stress, which means that the
      calibration of the model is also more challenging.}
     \end{center}
  \end{figure}
\end{center}
\begin{center}
  \begin{figure}
    \begin{center}
      \includegraphics[width=0.9\linewidth]{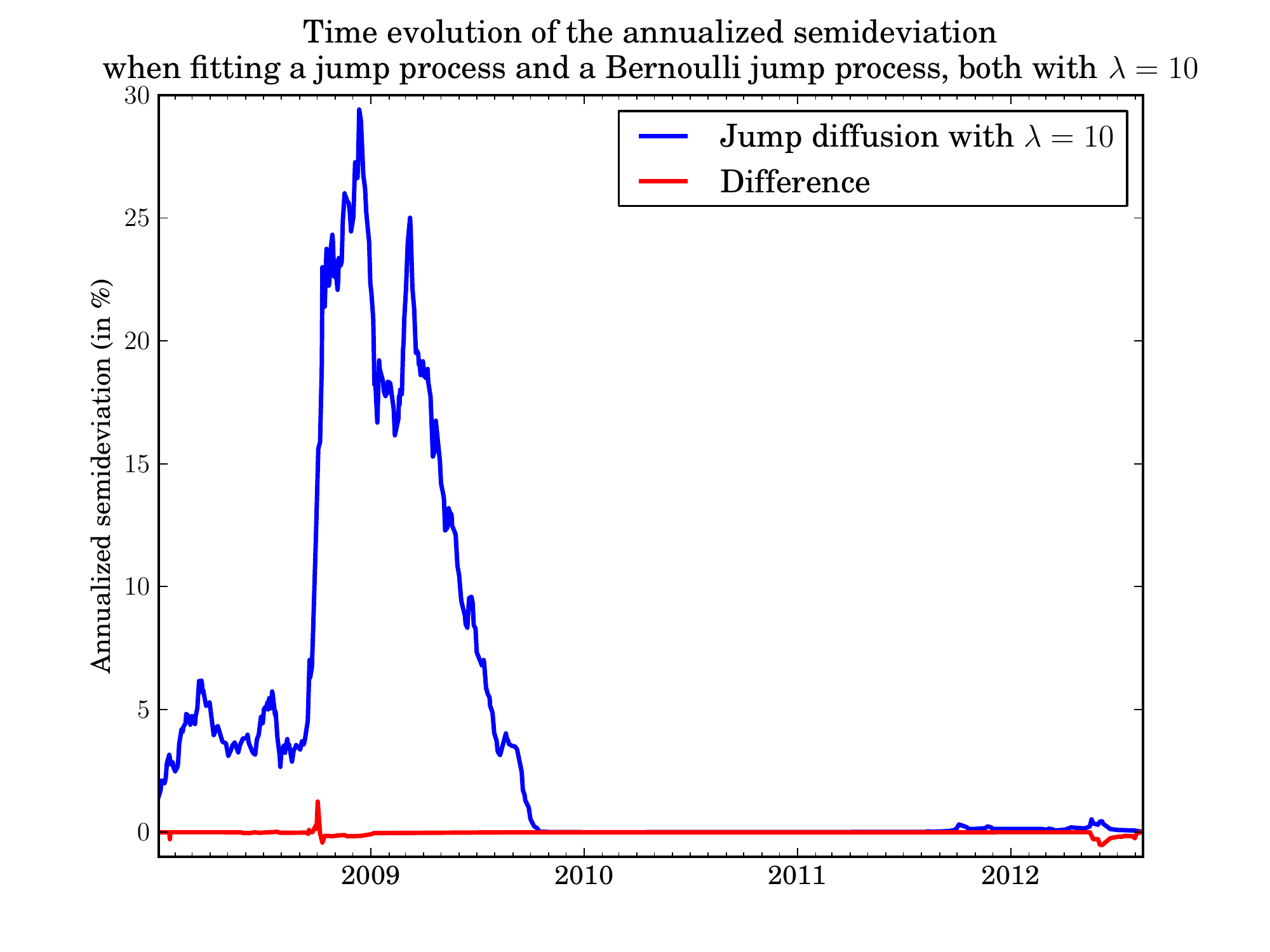}
      \caption{\label{graph:sds-jump-bern}Comparison between the
        semideviation (in \%) obtained from a jump
        diffusion process (in blue) and the difference with a Ball and Torous
        jump-diffusion (in red). In both cases, we have set $\lambda =
        10$. The difference is very small compared to the level of the
        semideviation.}
    \end{center}
  \end{figure}
\end{center}

\subsubsection{Evolution of $\sigma$}
Finally, we compare in Figure~\ref{graph:sigmas} the time
evolution of the volatility $\sigma$ when fitting a jump process and a
pure diffusion process. One can observe that the $\sigma$ obtained for a
pure diffusion process is larger in magnitude than the one obtained
for a jump diffusion process; this comes without any surprise, as in
the second case, the volatility is also captured by the jumps. A
second fact that we would like to outline is that $\sigma$ appears to
be quite unstable for the jump process. This can be interpreted as a
possible misspecification of the model, as the jumps
appear to influence the value of $\sigma$. In other words, assuming a
constant $\sigma$ might be a wrong hypothesis for our data, and this
can be translated as a further motivation to study processes with stochastic volatility including 
jumps in returns and in volatility, as we will do it in the next section. 
\begin{center}
  \begin{figure}
    \begin{center}
      \includegraphics[width=0.9\linewidth]{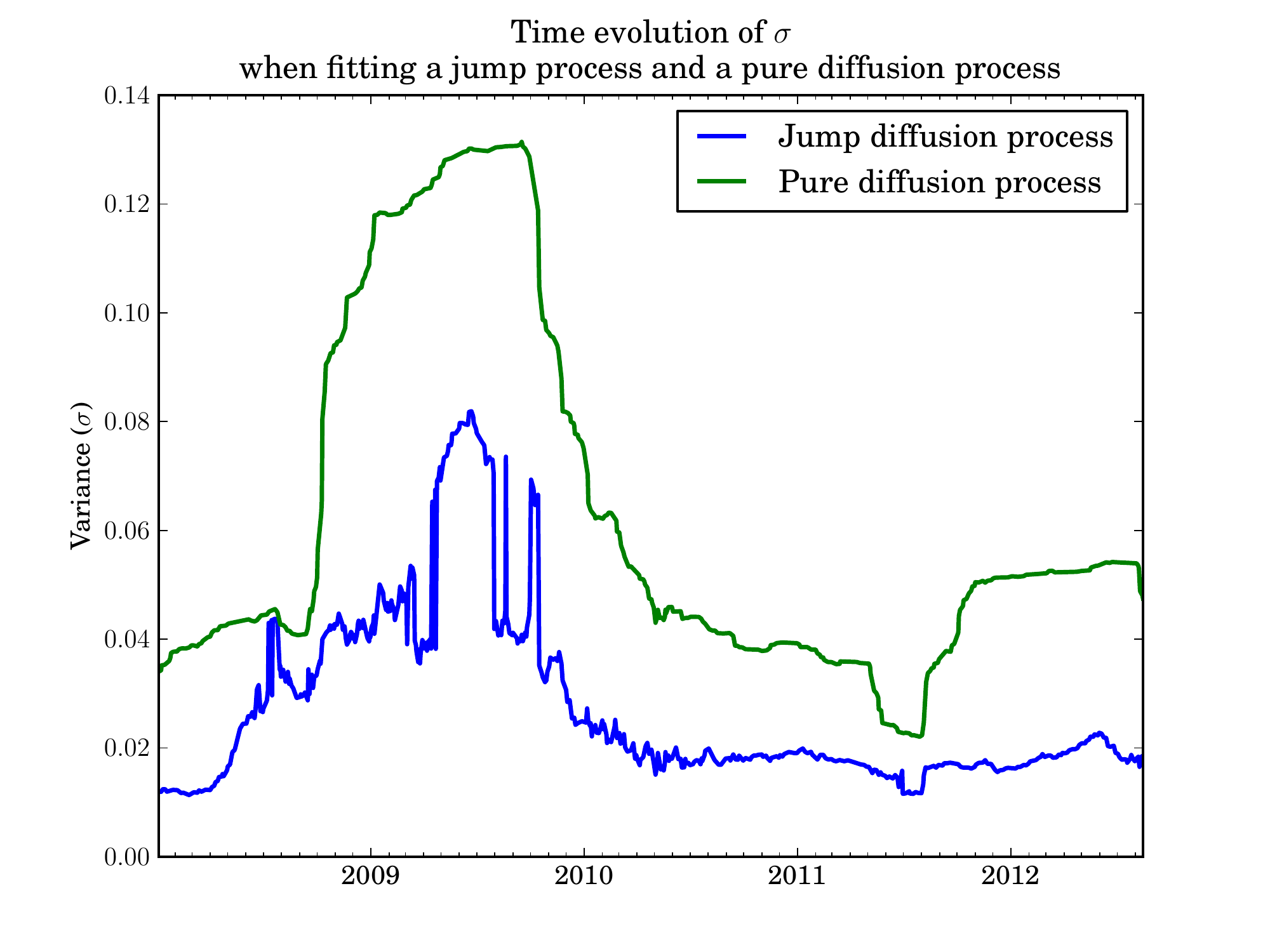}
      \caption{\label{graph:sigmas}Comparison between the time
        evolution of the parameter $\sigma$ obtained when fitting a
        jump process (in blue) and a pure
        diffusion process (in green).}
    \end{center}
  \end{figure}
\end{center}

\section{Extension to Stochastic Volatility Models with Jumps in
  Returns and Volatility}
\label{sec:jump_vol}
 
We explain now how to extend the methodology we developed before to
processes involving stochastic volatility with jumps in returns and
volatility. This section is structured as
follows. In~\S\ref{sub:import}, we recall
the importance of considering jumps in returns and in
volatility. Then, in~\S\ref{subsec:eraker}, we describe the stochastic
volatility model with jumps in returns and in volatility that we
consider, which is an extension of the model analyzed by
~\cite{EraJohPol03}. The only difference is
that we have replaced the Bernoulli jump process by the more general
model that we have developed in~\S\ref{sec:generalization}. We
continue by addressing in~\S\ref{sub:prac} the statistical estimation
of the model parameters and of their posterior distributions. We have
followed the approach developed in \cite{NumRen10} and our
implementation is based on a script developed in the R language
provided by the same authors that we have modified 
in order to be able to cope with our approach. Once the
parameters have been estimated, we show in~\S\ref{sub:semi} how to
compute an annualized semivariance. In a nutshell, the first step
consists in generating the returns over a 1-year horizon and then to
compute the semivariance thanks to numerical integration. Finally, we
exhibit some experimental results in~\S\ref{stochvol:res}.

\subsection{Importance of Jumps in Returns and Volatility}
\label{sub:import}

Modeling accurately the price of stocks is an important topic in
financial engineering. A major breakthrough was the~\cite{BlaSch73}
model, that however suffers from at least two major
drawbacks. The first one is that the stock price follows a lognormal
distribution and the second one is that its volatility is
constant. Several studies (see~\cite{CheGalGhyTau99}, for instance)
have emphasized that the asset returns' unconditional distributions
show a greater degree of kurtosis than implied by the normality
assumption, and that volatility clustering is present in the data,
suggesting random changes in returns' volatility. An extension of this
model is to integrate large movements in the stock prices making
possible to model financial crashes, like the Black Monday (October
19th, 1987). They were introduced in the form of jump-diffusion
models, see \cite{Mer76,CoxRos76}. An important extension of the
Black-Scholes model was to use stochastic volatility rather than
considering it as constant, see for instance
\cite{HulWhi87,Sco87,Hes93}. \cite{Bat96} and~\cite{Sco97} combined
these two approaches
and introduced the stochastic volatility model with jumps in
returns. Event though their new approach has helped better characterize 
the behavior of stock prices, several studies have shown that models
with both diffusive stochastic volatility and jumps in return were
incapable of fully capturing the empirical features of equity index
returns, see for instance~\cite{BakCaoChe97,Bat00,Pan02}. Their main
weakness is that they do not capture well the empirical fact that the
conditional volatility of returns rapidly increases. By adding jumps
in the volatility, then the volatility process is better
specified. \cite{DufPanSin00} were the first
to introduce a model with jumps in both returns and
volatility. \cite{EraJohPol03} have shown
that the new model with jumps in volatility performed better than
previous ones, and resulted in no major misspecification in the
volatility process.

\subsection{Extension of the work of Eraker, Johannes, and Polson}
\label{subsec:eraker}

As mentionned above, \cite{EraJohPol03} consider a jump
diffusion process with jumps in returns and in volatility. These jumps
arrive simultaneously, with the jump sizes being correlated. According
to the model, the logarithm of an asset's price $Y_t = \log(S_t )$
solves
\begin{equation}
\left( \begin{array}{c} dY_t \\ dV_t \end{array} \right) =
\left( \begin{array}{c} \mu \\ \kappa (\theta-V_t^-) \end{array}
\right) dt 
+ \sqrt{V_{t^-}} \left( \begin{array}{cc} 1 & 0 \\ \rho \sigma_\nu &
    \sqrt{(1-\rho^2)} \sigma_\nu \end{array} \right) d\bm{W}_t 
+ \left( \begin{array}{c} \xi^y\\ \xi^\nu \end{array}
\right) dN_t
\label{eqn:stochVolDef}
\end{equation}
where $V_t^− = \lim_{s \uparrow t} V_s$, $\bm{W}_t = ( W_t^{(1)}  \
W_t^{(2)})^\mathsf{T}$ is a standard Brownian motion in $\mathbb{R}^2$ and
$(\cdot)^\mathsf{T}$ denotes the transpose of a matrix or a vector. The
jump arrival $N_t$ is a Poisson process with
constant intensity $\lambda$, and this
model assumes that the jump arrivals for the returns and the
volatility are contemporaneous. The variables $\xi^y$ and $\xi^\nu$
denote the jump sizes in returns and volatility, respectively. The
jump size in volatility follows an exponential distribution $\xi^\nu
\sim  \exp(\mu_v)$ while the jump sizes in returns and volatility are
correlated with $\xi^y|\xi^\nu \sim  \mathcal{N}(\mu_y+\rho_j
\xi^\nu,\sigma_y^2)$. 

Their methodology relies on Markov Chain Monte Carlo (MCMC)
methods. The basis for their MCMC estimation is the
time-discretization of~\eqref{eqn:stochVolDef}:
\begin{eqnarray}
Y_{t + \Delta t} - Y_t & = &   \mu \Delta t  + \sqrt{V_{t } \Delta t}
\varepsilon_{t + \Delta t}^y + \xi_{t + \Delta t}^y J_{t + \Delta t},
\nonumber \\
V_{t + \Delta t} - V_t & = &   \kappa(\theta-V_{t}) \Delta t +
\sigma_\nu \sqrt{V_{t}  \Delta t}  \varepsilon_{t + \Delta t}^\nu   +
\xi_{t + \Delta t}^\nu J_{t + \Delta t}, \label{eqn:disc}
\end{eqnarray}
where $J_{t + \Delta t}^k = 1$ indicates a jump
arrival. $\varepsilon_{t  + \Delta t}^y$, $\varepsilon_{t + \Delta
  t}^\nu$ are standard normal random variables with correlation $\rho$
and $\Delta t$ is the time-discretization interval (i.e., one
day). The jump sizes retain their distributional structure and the
jump times are Bernoulli random variables with constant intensity
$\lambda \Delta t$. The authors apply
then Bayesian techniques to compute the model parameters
$\bm{\Theta}$. The posterior distribution summarizes the sample
information regarding the parameters $\bm{\Theta}$ as well as the
latent volatility, jump times, and jump sizes:
\begin{equation}
\Pr(\bm{\Theta},J, \xi^y,\xi^\nu,V|\bm{Y}) \propto
\Pr(\bm{Y}|\bm{\Theta},J,\xi^y,\xi^\nu,V)
\Pr(\bm{\Theta},J,\xi^y,\xi^\nu,V) 
\end{equation}
where $J,\xi^y,\xi^\nu,V$ are vectors containing the time series of
the relevant variables. The posterior combines the likelihood
$\Pr(\bm{Y}|\bm{\Theta},J,\xi^y,\xi^\nu,V)$ and
the prior
$\Pr(\bm{\Theta},J,\xi^y,\xi^\nu,V|\bm{Y})$. As
the posterior distribution is not known in closed form, the MCMC-based
algorithm generates samples by iteratively drawing from the following
conditional posteriors:
\begin{eqnarray*}
\text{Parameters:} &  \Pr(\bm{\Theta}_i | \bm{\Theta}_{-i}, J,
\xi^y, \xi^\nu, V, \bm{Y}) &
i = 1,\dots, k \\
\text{Jump times:} &  \Pr(J_t = 1 | \bm{\Theta}, \xi^y, \xi^\nu, V, \bm{Y}) & i
= 1,\dots, T \\
\text{Jump sizes:}&  \Pr(\xi_t^y | \bm{\Theta}, J_t = 1, \xi^\nu, V, \bm{Y}) & i
= 1,\dots, T \\
 &  \Pr(\xi_t^\nu | \bm{\Theta}, J_t = 1, V, \bm{Y}) & i = 1,\dots, T \\
\text{Volatility:}&  \Pr(V_t|  V_{t+\Delta t}, V_{t-\Delta t},\bm{\Theta}, J,
\xi^y, \xi^\nu, \bm{Y}) & i = 1,\dots, T
\end{eqnarray*}
where $\bm{\Theta}_{-i}$ denotes the elements of the parameter vector,
\emph{except} $\bm{\Theta}_i$. Drawing from these distributions is
straightforward, with the exception of the volatility, as its distribution
is not of standard form. To sample from it, the authors use a
random-walk Metropolis algorithm. For $\rho$, they use an independence
Metropolis algorithm with a proposal density centered
at the sample correlation between the Brownian increments. For a
review of these MCMC techniques, we recommend \cite{JohPol09} 
where the authors provide a review of the theory behind MCMC
algorithms in the context of continuous-time financial econometrics. 
The algorithm of~\cite{EraJohPol03} produces
a set of draws $\left\{\bm{\Theta}^{(g)}, J^{(g)}, \xi^{y(g)},
  \xi^{\nu(g)}, V^{(g)}\right\}_{g=1}^G$ which are samples from
$\Pr(\bm{\Theta},J,\xi^y,\xi^\nu,V|\bm{Y})$.

Our approach relies on the methodology developed
in~\cite{EraJohPol03}. The only difference is that we have replaced
the Bernoulli jump process by the jump process described
in~\S\ref{sec:generalization}.
This change is quite straightforward. In~\eqref{eqn:disc},  $J_{t +
  \Delta t}$ can take only two values and its posterior is Bernoulli
with parameter $\lambda\Delta t$. To compute the Bernoulli
probability, the authors use the conditional independence of
increments to volatility and returns to get 
\begin{eqnarray*}
&\Pr(J_{t+\Delta t} = 1 | V_{t + \Delta t}, V_t, Y_{t + \Delta t},
\xi_{t+\Delta t}^y, \xi_{t+\Delta t}^\nu, \bm{\Theta}) \propto & \\
& \lambda \Delta t \times \Pr(Y_{t + \Delta t}, V_{t + \Delta t} | V_t,
J_{t+\Delta t} = 1,  \xi_{t+\Delta t}^y , \xi_{t+\Delta t}^\nu,
\bm{\Theta}).& 
\end{eqnarray*}
Our generalization is straightforward. Following the approach
explained in~\S\ref{sec:generalization}, $J_{t + \Delta t}$ can take
any value between 0 and $m$. Then
 \begin{eqnarray*}
&\Pr(J_{t+\Delta t} = m | V_{t + \Delta t}, V_t, Y_{t + \Delta t},
\xi_{t+\Delta t}^y, \xi_{t+\Delta t}^\nu, \bm{\Theta}) \propto & \\
& p_k \times \Pr(Y_{t + \Delta t}, V_{t + \Delta t} | V_t, J_{t+\Delta
  t} = m,  \xi_{t+\Delta t}^y , \xi_{t+\Delta t}^\nu, \bm{\Theta}),&  
\end{eqnarray*}
with $p_k = \frac{e^{-\lambda \Delta t} (\lambda \Delta t)^k}{k!}$ for $k =
0,\dots, m-1$ and with $p_m = \sum_{k=0}^{m-1}p_k$.

\subsection{Parameters Estimation}
\label{sub:prac}

We have followed the approach developed by~\cite{NumRen10} for the
estimation of the model parameters. These
authors describe an implementation in R of the
methodology exposed by~\cite{EraJohPol03} and apply it on FTSE~100
daily returns. We assume that we have daily data at times $t_i$ for $i
= 1,\dots,T$ and
with $t_{i+1}-t_i = \Delta t = 1\text{ day}$. The bivariate density
function under consideration is given by
\begin{equation}
f(\bm{B}_{t_i}) = \frac{1}{2 \pi | \bm{\Sigma} | ^{ 0.5}} \exp \left(
  -\frac{1}{2}(\bm{B}_{t_i}-\mathrm{E}[\bm{B}_{t_i}])^\mathsf{T} \bm{\Sigma}_{t_i}^{-1}
  (\bm{B}_{t_i}-\mathrm{E}[\bm{B}_{t_i}]) \right),
\label{eqn:bi}
\end{equation}
where $|\cdot|$ denotes the determinant of matrix. The likelihood function is
simply given by $\prod_{i=1}^n  f(\bm{B}_{t_i})$ with
\begin{equation}
\bm{B}_{t_i}  =  \left( \begin{array}{c} \Delta y_{t_i} \\ \Delta
    \nu_{t_i} \end{array} \right)
\end{equation}
\begin{equation}
\mathrm{E}[\bm{B}_{t_i}]  =  \left(   \begin{array}{c} \mu + \xi_{t_i}^y \xi
    J_{t_i} \\ \kappa(\theta - V_{t_{i-1}}) + \xi_{t_i}^\nu
    J_{t_i} \end{array}  \right) \\
\end{equation}
\begin{equation}
\bm{\Sigma}_{t_i} = \mathrm{Cov}[\Delta y_{t_i}, \Delta \nu_{t_i}]   =
\left(   \begin{array}{cc} V_{t_{i-1}} &  \rho \sigma_\nu V_{t_{i-1}}
    \\ 
\rho \sigma_\nu V_{t_{i-1}} & \sigma_\nu^2 V_{t_{i-1}}
 \end{array}  \right) \\
\end{equation}
where $\Delta y_{t_i} = Y_{t_i} - Y_{t_{i-1}}$ and $\Delta
\nu_{t_i}=V_{t_i} - V_{t_{i-1}}$. The joint distribution is given by
the product of the likelihood times the distributions of the state
variables times the priors of the parameters, more specifically:
\begin{eqnarray*}
&&\left[ \prod_{i=1}^T f(\bm{B}_{t_i}) \right] \times \left[
  \prod_{i=1}^T
  f(\xi_{t_i}^y) \times f(\xi_{t_i}^\nu)\times f(J_{t_i}) \right] \\
&& \times \left[ f(\mu) \times f(\kappa) \times f(\theta) \times
  f(\rho) \times f(\sigma_\nu^2) \times f(\mu_y) \times f(\rho_J)
  \times f(\sigma_y^2)  \right] \\
&&  \times \left[ f(\mu_{\nu}) \times f(\lambda) \right]
\end{eqnarray*}
The distributions of the state variables are respectively given by
$\xi_{t_i}^\nu\sim \exp(\mu_\nu)$, $\xi_{t_i}^y \sim \mathcal{N}(\mu_y
+ \rho_J \xi_{t_i}^\nu$, $\sigma_y^2)$, and $J_{t_i} \sim
\mathcal{B}(\lambda)$. In~\cite{NumRen10}, the authors impose
little information through priors, that follow the same distributions
than in~\cite{EraJohPol03}: $\mu \sim \mathcal{N}(0,1)$, $\kappa \sim
\mathcal{N}(0,1)$, $\theta \sim \mathcal{N}(0,1)$, $\rho \sim
\mathcal{U}(-1,1)$, $\sigma_\nu^2 \sim\mathcal{IG}(2.5, 0.1)$, $\mu_y
\sim \mathcal{N}(0,100)$, $\rho_J \sim \mathcal{N}(0,1)$, $\sigma_y^2 \sim
\mathcal{IG}(5,20)$, $\mu_\nu\sim \mathcal{G}(20,10)$, and $\lambda \sim
\beta(2,40)$.\footnote{Here,
  $\mathcal{B}(\cdot)$, $\mathcal{G}(\cdot, \cdot)$,
  $\mathcal{IG}(\cdot, \cdot)$,
  $\mathcal{U}(\cdot, \cdot)$, $\beta(\cdot, \cdot)$ denote the
  Bernoulli, gamma, inverse gamma, uniform and beta distributions,
  respectively.} Following our
approach, we only made the following change: we have relaxed the
assumption about the Bernoulli process to consider the more general
process proposed in~\S\ref{sec:generalization}. It may be
also tempting to change the assumptions about the distribution of
$\lambda$, but the $\beta$ distribution is
so flexible that it can cope with a wide variety of assumptions about
these parameters. For example, the $\beta(1,1)$ gives a uniform
variable which can be used if we want to have a uninformative
distribution. 

It is important to store the sampled parameters and the vector of
state variables at each iteration. The mean of each parameter over the
number of iterations gives us the parameter estimate. The convergence
is checked using trace plots showing the history of the chain for each
parameter. ACF plots is used to analyze the correlation structure of
draws. Finally, the quality of the fit may be assessed through the
mean squared errors and the normal QQ plot. The standardized error is
defined as follows: 
$$ \frac{Y_{t+\Delta t}-Y_t - \mu \Delta t - \xi_{t+\Delta t}^y
  J_{t+\Delta t}^y}{\sqrt{V_{t+\Delta t} \Delta t}}=
\varepsilon_{t+\Delta t}^y.$$

\subsection{Computing the Semideviation}
\label{sub:semi}

Once we have managed to estimate the parameters of the model
on daily data, we still have to compute an annualized
semideviation. The first step consists in simulating the returns over
the $\tau$ horizon, where $\tau$ is typically 1-year if we are
interested in computing an annualized semideviation. We start by
simulating a Poisson process $N_{\tau}$ with jump intensity
$\lambda$. The output of this simulation consists of the number $N$ of
jumps occurring between times 0 and $\tau$, in addition to the times $0
\leq j_1 \leq \dots \leq j_N \leq \tau$ at which these jumps occur. For
each interval $[j_{i-1}, j_i]$, we simulate the diffusion parts of the
return and volatility processes according to an Euler discretization
schema for the Heston model, this procedure being explained later in
the section. This gives us preliminary values $Y_{j_i}^-$ and
$V_{j_i}^-$ for the return and variance processes at time $j_i$. Next,
we simulate jump sizes for the jumps in the two processes. These jumps
are given by $\xi_{j_i}^{y}$ and $\xi_{j_i}^{\nu}$, respectively. The
final values of the two processes at time $j_i$ can be calculated as:
\begin{eqnarray}
Y_{j_i} & = & Y_{j_i}^- + \xi_{j_i}^{y} \\
V_{j_i} & = & V_{j_i}^- +   \xi_{j_i}^{\nu}
\end{eqnarray}
If $j_N \neq \tau$, then no jump occurs in the interval $[j_N, \tau]$,
and we can apply the Euler discretization schema for the Heston model to
get the values of $Y_{\tau}$ and $V_{\tau}$. 

In the following, we explain how to implement the Euler discretization
schema for the Heston model. Discretizing the two
equations~\eqref{eqn:stochVolDef} and ignoring the jump part, the
difference between $Y_{t+\Delta t}$ and $Y_t$ and the difference
between $V_{t+\Delta t}$ and $V_t$ is simply given by 
\begin{eqnarray}
Y_{t+\Delta t} - Y_t & = &   \mu \Delta t  + \sqrt{V_t}  \left( \Delta
W_t^{(1)}\right), \nonumber \\
V_{t+\Delta t} - V_t & = &   \kappa(\theta-V_{t}) \Delta t +
\sigma_\nu \sqrt{V_{t}}  \left(\rho \Delta W_{t}^{(1)} + \sqrt{1-\rho^2}
\Delta W_{t}^{(2)}\right). \label{eqn:Heston}
\end{eqnarray}
As already mentioned before, we typically use $\Delta t=1$ day to
discretize both processes. To simulate the Brownian increments $\Delta
W_{t}^{(1)}$ and $\Delta W_{t}^{(2)}$, we use the fact that each
increment is independent of the others. Each such increment is normally
distributed with mean 0 and variance $\Delta t$. However,
using~\eqref{eqn:Heston} for the simulations may generate negative
volatilities. This is a well-known effect that has been addressed 
several times in the literature. Following~\cite{LorKoeDij06}, we
slightly modify~\eqref{eqn:Heston} to prevent the simulation
from generating negative values:
\begin{equation}
V_{t+\Delta t} - V_t  =   \kappa(\theta-V_{t}^+) \Delta t + \sigma_\nu
\sqrt{V_{t}^+}  \left( \rho \Delta W_{t}^{(1)} + \sqrt{1-\rho^2} \Delta
W_{t}^{(2)}\right),\label{eqn:Heston2}
\end{equation}
where $V_t^+$ is the maximum between $0$ and $V_t$. If the time
between
different jumps is smaller that the discretization step $\Delta t$,
then it simply means that we have several jumps occurring during that
time interval. In that case, we just have to simulate the
appropriate number of jumps, both for the return and for the
volatility, during that period. In the end, the return over the period
$[0, \tau]$ is just $Y_{\tau}$. By repeating this procedure $M$ times,
we get $M$ realizations of the accumulated returns over the period
$[0,\tau]$, which make it possible to estimate its density. In our
tests, we have used a Gaussian kernel to estimate it. This can easily
be done in R by using the \texttt{density()} function. Then, one can
obtain the semivariance by numerically integrating the expression
given in~\eqref{eq:TSV}. For this, we have used the QUADPACK
library via the use of the \texttt{integrate()} function in R. QUADPACK is a
well known numerical analysis library implemented in FORTRAN~77 and
containing very efficient methods for numerical integration of
one-dimensional functions.  

\subsection{Results}
\label{stochvol:res}

We ran the algorithm using 100'000 iterations with 10'000 burn-in
iterations. Table~\ref{tbl:param} provides parameter posterior means and their computed standard
errors. 
\begin{table}
  \begin{center}
    \begin{tabular}{lcc}
      \hline
      Parameter & Value (100k iter.) \\
      \hline
      $\mu$& 0.0333 (0.0132)\\
      $\mu_y$& -0.0222 (0.2055)\\
      $\mu_\nu$& 0.0028 (0.0007) \\
      $\theta$& 0.2087 (0.0147)\\
      $\kappa$& 0.9059 (0.0433)\\
      $\rho$&  -0.0058 (0.2595)\\
      $\rho_J$& 0.0050 (2.0007)\\
      $\lambda$&  0.0236 (0.0293)\\
      $\sigma_y$& 1.0829 (0.1379)\\
      $\sigma_\nu$& 0.2060 (0.0204)\\
      \hline
    \end{tabular}
  \end{center}
  \caption{Estimated parameters based on 100'000 iterations. The time
    unit is the day rather than a year, meaning that parameters should
    be interpreted on a daily basis. $\mu$, $\mu_y$, $\mu_\nu$,
    $\sigma_y$, and $\sigma_\nu$ are in \%.}
  \label{tbl:param}
\end{table}
The return mean $\mu$ is close to the daily return mean from the
data ($0.0325 \%$). The long-term mean of the $V_t$, given by
$\mathrm{E}[V_t] = \theta+(\mu_\nu \lambda) / \kappa$, is equal to
$0.2088 \%$. This value is not far away from the variance of returns,
which is equal to $0.1751$. The jump returns are normally distributed
with mean $-0.0222 \%$
and standard deviation $0.2055 \%$. Concretely, it means that there is a
$68 \%$ likelihood of having a jump return between $-23\text{ bps}$
and $+18\text{ bps}$. The parameters $\sigma_y$ and $\sigma_\nu$ are
$1.0829 \%$ and $0.2060 \%$, respectively. Finally, the intensity of the
jumps is $0.0236$. This means that, on average, there are
approximately 6 jumps per year. This is a bit low in comparison with
the number of jumps that we have observed in our data, as we would rather
expect an intensity around $0.05$. However, this empirical intensity
was computed considering that differences in returns above $2.57$
deviations from the mean could be considered as jumps. Some caution
must be taken, since this empirical intensity is very sensitive to the
number of standard deviations used as threshold for defining the
jumps. More details about the implementation can be found in
Appendix~\ref{sec:input}.

In a second step, we have computed an annualized semideviation based
on that process. We have followed the methodology explained
in~\S\ref{sub:semi} to first compute the simulated returns and
then the semideviation. 
\begin{center}
  \begin{figure}
    \begin{center}
      \includegraphics[width=0.9\linewidth]{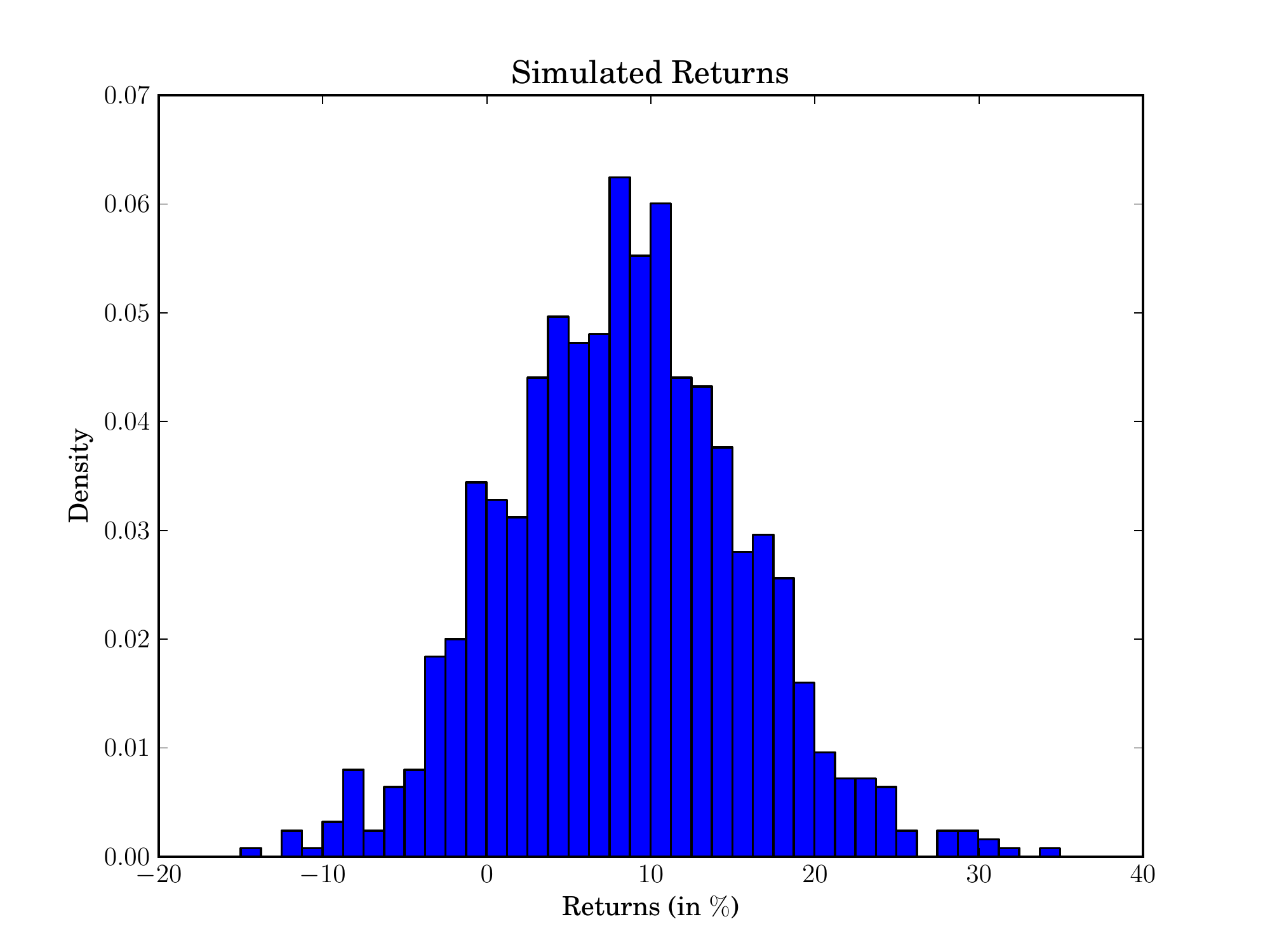}
      \caption{\label{graph:simulatedReturns}This graph shows the
        histogram of the returns based on 1'000 simulations.}
    \end{center}
  \end{figure}
\end{center}
Our tests have shown that 1'000 simulations were sufficient to get an
accurate estimation of the density from which we have derived a
semideviation. Note that this semideviation is based on the whole
historical data. It was not possible for us to do the same rolling
analysis that we have performed before, assuming a constant
volatility, for the two following reasons. The first one is that the
estimation of the parameters is very time-consuming. Around 7 hours
were necessary to generate the parameters for the whole period with an
\textit{Intel\/$^{\mbox{\scriptsize{\textregistered}}}$
  Core$^{\mbox{\scriptsize{\texttrademark}}}$~i5-3470 CPU @ 3.20
  GHz}. Assuming that we would have to perform a rolling analysis
based on 1-year of history over the same period, then we would have to
run more than 1000 times this analysis. This is not impossible to
realize but would typically require parallel computing techniques. The
second reason for not having performed this analysis is that we have
to check the convergence of the process for each period we consider in
the rolling analysis, a task that is difficult to fully automate. As a
reminder, the convergence can be checked with trace plots showing the
history of the chain for each parameter, with ACF plots to analyze the
correlation structure of draws, and  the quality of the fit may be
assessed through the mean squared errors and the normal QQ plot. So
these two reasons have prevented us from doing this rolling analysis
when we consider the model with stochastic volatility and jumps in
returns and in volatility. However, we have performed some analyses
based on 1 year of data for the two worst periods: during the credit
crunch crisis in 2008 and in 2011. Table~\ref{tbl:stochvolres}
summarizes the results and compares the semideviations obtained from
the different models that we have explored in this paper.
\begin{table}
  \begin{center}
    \begin{tabular}{ccccccccccccccccc}
      \hline
      Year          & SD1 & SD2 & SD3 & SD4  \\
      \hline 
      2008-2012   &      4.70             & 1.21            &     1.39              &         1.74          \\
      2008           &     8.71              &  31.59         &      31.72           &         32.16          \\
      2011           &     3.69              &  1.31            &     1.34             &         2.90             \\
      \hline
    \end{tabular}
    \caption{\label{tbl:stochvolres}Annualized semideviations (in \%) for different periods. SD1
      is based on the square-root-of-time rule, SD2 on a jump-diffusion
      model with constant volatility, SD3 on a pure diffusion model with a constant volatility, and
      SD4 is based on a jump diffusion model with stochastic volatility
      and jumps in returns and in volatility.}
  \end{center}
\end{table}
We can observe that the semideviation
SD4 obtained with a stochastic volatility model seems to be more in line
with the semideviations SD2 (based on a jump-diffusion model) and SD3
(based on a pure diffusion model) rather than SD1 (square-root-of-time
rule). This is not very surprising since SD4 may be interpreted as a
refinement of SD2, explaining why the two analytics should not be so
different.

\section{Conclusion and Future Venues of Research}
In this paper, we propose a generalization of the Ball-Torous
approximation of a jump-diffusion process and we show how to compute
the semideviation based on
several jump-diffusion models. We have in particular considered
a very exhaustive example based on the Barclays US High Yield Index,
whose returns show negative skewness and fat tails. It is a common
practice to impose a bound on the Poisson process intensity parameter
to make sure that only the large accidents are captured by the jumps
process. However, our analysis clearly shows that the risk may be
underestimated in such a situation. Without constraining the intensity
parameter, the semideviation may be more than $50$ \% larger than the
one obtained by arbitrary limiting this parameter in order to capture
only large jumps. 

We see at least two reasons why our approach
should be preferred. The first one is that we do not impose any
arbitrary constraint on the intensity parameter. With our method, this
parameter is determined in order to obtain the best fit to the
data. Jumps may occur very often in certain periods and can be very
uncommon (as rare as a few ones per year) in others. The second reason
is that imposing an arbitrary limit on the intensity parameter may
result in an underestimation of the risk. Capturing the risk as
accurately as possible is one of the most important mission in risk
management. It is the reason why the ``traditional" approach based on
the work of Ball and Torous does not seem appropriate in our
context. Moreover, one of our conclusions is that the use of the
square-root-of-time rule may either underestimate the risk in periods
of high volatility or underestimate it in periods of lower stress.

We also provide in this paper a generalization of the work of 
Eraker, Johannes and Polson, who consider a jump diffusion model with
stochastic volatility and jumps in returns and in volatility. As in
the case where the volatility is constant, we have extended their
approach by replacing the jump Bernoulli process by a more general
process being able to model several accidents per day when the
intensity parameter is high. The parameter estimation methods are
based on MCMC methods. In particular, we have used a Gibbs sampler
when the conditional distributions were known and variants of the
Metropolis algorithm when it was not the case. We also provide a
procedure to compute an annualized semideviation once the parameters
of the model have been estimated.

It was not possible for us to go as far as we would like to when we
have considered the model with stochastic volatility with jumps in
returns and in volatility. It would have been very useful to be
able to do the same rolling analyses that we have performed when the
volatility was kept constant. However, such analyses would have been
too much time consuming from a computational point of view and it
would have been complicated to automate the check of the convergence
of the process that needs to be done for each period in the rolling
analysis. This was clearly a limitation when we have tried to
determine empirically the relationship between the daily semideviation
and its annualized version.   

We really think that the use of the semideviation (or semivariance)
could benefit the finance industry by
providing a useful and powerful risk measure. This risk metric has
been proposed a very long time ago but its difficulty to be computed
and its lack of nice properties for its scaling made it hard to
implement. However, we have shown in this article that it is still
possible to calculate it even when we consider very complex stochastic
processes and that some useful formula for its time scaling can be
derived under some mild assumptions. Finally, we hope that this paper will
help democratize its use in the asset management industry.\\
~\\
\noindent\textbf{Disclaimer:} The views expressed in this article are
the sole responsibility of the authors and do not necessarily reflect
those of Pictet Asset Management SA. Any remaining errors or
shortcomings are the authors’ responsibility.

\appendix

\section{Proof of Theorem~\ref{the:semivariance}}
\label{app:proof}

As a first step, we note the two following identities:
\begin{equation}
  \label{eq:easy_identities}
  \int_{-\infty}^{a} t\phi(t)\,dt = -\phi(a) \text{ and }
  \int_{-\infty}^{a} t^2\phi(t)\,dt = -a\phi(a) + \Phi(a). 
\end{equation}
The first one can be easily obtained by observing that
$\frac{d\phi}{dt} = -t\phi(t)$ and $\lim_{t\to-\infty} \phi(t) =
0$. Then, 
\begin{equation*}
  \int_{-\infty}^{a} t\phi(t)\,dt = -\int_{-\infty}^{a} \frac{d}{dt}
  \phi(t) \,dt = \phi(a). 
\end{equation*}
For obtaining the second identity, one can proceed as follows: we note
that, for a standardized normal density $\phi$, $\int t\phi(t)\,dt =
-\phi(t)$. Integrating by part, we get
\begin{equation*}
  \int_{-\infty}^{a}t^2\phi(t)\,dt =
  \int_{-\infty}^{a}t(t\phi(t))\,dt = -t\phi(t)\Bigl|_{-\infty}^{a} +
  \int_{-\infty}^{a} \phi(t)\,dt.
\end{equation*}
We can conclude by noting that $\lim_{t\to-\infty} t\phi(t) =
0$. Those two identities are useful to prove the following result. 
\begin{proposition}
  \label{prop:prop0}
Let $W \sim  \mathcal{N}(\mu_1,\sigma_1^2)$, with a density function  
$$f_W(w)=\frac{1}{\sigma_1 \sqrt{2  \pi}} \exp\left(-\frac{1}{2} \left(
    \frac{w-\mu_1}{\sigma_1}\right)^2\right),$$
and let $\tilde{D} = \frac{D-\mu_1}{\sigma_1}$; then, the semivariance
of $f_W$ is given by
\begin{equation}
\int_{-\infty}^D f_W(w)(D-w)^2dw  =   (D-\mu_1)^2 \Phi(\tilde{D}) +
\sigma_1 (D-\mu_1) f(\tilde{D}) +\sigma_1^2 \Phi(\tilde{D})
\label{eqn:semivarianceformula}
\end{equation}
\end{proposition}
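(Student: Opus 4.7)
My plan is to reduce the integral to the standard normal case by a linear change of variables, expand the square, and apply the two identities stated just before the proposition.

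First, I would substitute $t = (w-\mu_1)/\sigma_1$, so that $w = \mu_1 + \sigma_1 t$, $dw = \sigma_1\,dt$, and $f_W(w)\,dw = \phi(t)\,dt$. The upper limit $w=D$ becomes $t = \tilde{D}$. Under this change, $D - w = (D-\mu_1) - \sigma_1 t$, and hence
\begin{equation*}
\int_{-\infty}^{D} (D-w)^2 f_W(w)\,dw = \int_{-\infty}^{\tilde{D}} \bigl[(D-\mu_1) - \sigma_1 t\bigr]^2 \phi(t)\,dt.
\end{equation*}

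Next, I would expand the square to obtain three separate integrals:
\begin{equation*}
(D-\mu_1)^2 \int_{-\infty}^{\tilde{D}} \phi(t)\,dt \;-\; 2(D-\mu_1)\sigma_1 \int_{-\infty}^{\tilde{D}} t\phi(t)\,dt \;+\; \sigma_1^2 \int_{-\infty}^{\tilde{D}} t^2 \phi(t)\,dt.
\end{equation*}
The first integral is $\Phi(\tilde{D})$ by definition. For the remaining two, I would invoke the identities~\eqref{eq:easy_identities}, namely $\int_{-\infty}^{\tilde{D}} t\phi(t)\,dt = -\phi(\tilde{D})$ and $\int_{-\infty}^{\tilde{D}} t^2 \phi(t)\,dt = -\tilde{D}\phi(\tilde{D}) + \Phi(\tilde{D})$, which are established in the preamble to the proposition.

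Finally, I would collect terms. The cross term contributes $+2(D-\mu_1)\sigma_1\phi(\tilde{D})$ and the quadratic term contributes $-\sigma_1^2 \tilde{D}\phi(\tilde{D}) + \sigma_1^2\Phi(\tilde{D})$. Using $\sigma_1^2 \tilde{D} = \sigma_1(D-\mu_1)$, the $\phi(\tilde{D})$ coefficients combine to $\sigma_1(D-\mu_1)\phi(\tilde{D})$, yielding exactly \eqref{eqn:semivarianceformula}. There is no real obstacle here; the only thing to be careful about is tracking signs during the collection step and recognizing that $\sigma_1^2\tilde{D}$ simplifies to $\sigma_1(D-\mu_1)$ so that the middle term has the clean form stated in the proposition.
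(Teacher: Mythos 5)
Your proof is correct and follows essentially the same route as the paper's: the substitution $t=(w-\mu_1)/\sigma_1$, expansion into three integrals, and application of the identities~\eqref{eq:easy_identities}, with the final simplification $\sigma_1^2\tilde{D}=\sigma_1(D-\mu_1)$ handled exactly as in Appendix~\ref{app:proof}. Nothing is missing.
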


\begin{proof}
By first using the change of variable $x = \frac{w-\mu_1}{\sigma_1}$
and the identities given in~\eqref{eq:easy_identities}, we obtain the
following straightforward development. 
\begin{eqnarray*}
\int_{-\infty}^D f_W(w)(D-w)^2dw  =  \int_{-\infty}^{\tilde{D}}
f(x)\left(D-(\sigma_1 x + \mu_1)\right)^2dx =  \\
\int_{-\infty}^{\tilde{D}} (D- \mu_1)^2 f(x) dx 
 +\int_{-\infty}^{\tilde{D}} 2 \sigma_1 (\mu_1-D) x f(x) dx  
 + \int_{-\infty}^{\tilde{D}} \sigma_1^2 x^2 f(x) dx = \\
(D-\mu_1)^2 \Phi(\tilde{D}) +2 \sigma_1 (D-\mu_1) f(\tilde{D}) 
 +\sigma_1^2 \left(-\tilde{D} f(\tilde{D}) + \Phi(\tilde{D})\right) = \\
 (D-\mu_1)^2 \Phi(\tilde{D}) + \sigma_1 (D-\mu_1) f(\tilde{D})
 +\sigma_1^2 \Phi(\tilde{D})
\end{eqnarray*}
\end{proof}
Taking into account that the log returns follow the density given
in~\eqref{eqn:dens0}, we can apply Proposition~\ref{prop:prop0}
on~\eqref{eqn:dens0} to obtain the formula of
Theorem~\ref{the:semivariance}.

\section{Use of a Jump-Diffusion Process to Model a Bond
  Benchmark Index}
\label{app:NAV}
Using a jump-diffusion process is a standard practice for modeling a
stock price. We explain here why it is possible to use this model for a bond benchmark index,
 because this may seem disturbing at a first sight. Indeed, the particularities of a stock and of a bond are
quite different. In particular, a bond has a well-defined maturity,
while it is not the case for a stock. Moreover, the clean price of a
bond converges to 100 at its maturity. This is the well-known bond
pull-to-par effect. 

However, when considering a bond benchmark, most of criticisms in
favor of not using an equity model in this context become
irrelevant. Indeed, bond benchmarks have some characteristics that are
quite different from a bond. For example, they are typically
rebalanced on a monthly basis and their maturity are (almost) kept
constant, while the maturity of a bond decreases linearly with time.
The bonds with the shortest maturity (for example less than 1 year)
are automatically removed from the index.\footnote{The technical
  document describing the management of the Barclays bond indexes can
  be obtained at \url{https://indices.barcap.com/index.dxml}.} A
direct consequence of holding the maturity (almost) constant is that
the pull-to-par effect at the portfolio level is not relevant anymore.

Another distinction between a bond and a stock is that a bond pays
coupons while a stock pays dividends. When a coupon is paid, its dirty price
falls for an amount equivalent to the coupons. However, it is not
necessary to model the coupon payments for bond benchmarks, since the
rules they obey imply that the coupons are automatically
reinvested. Concretely, it means that the payment of a coupon does not
impact the benchmark index. 

Moreover,  in a diffusion model, the stock price
increases exponentially if we ignore the random part of the
model. This is a direct consequence of the fact the returns are not
additive, but must be compounded. This also applies to a bond
benchmark index where the performance is also
compounded. 

Finally, as stock markets may experience crashes, bond markets may
also suffered from extremely severe and sudden losses justifying the
use of jumps in returns and in volatility.  

All these remarks suggest that using a jump-diffusion process to model
a bond benchmark index seems to be appropriate, even though the
dynamics driving bonds are different from the ones driving stocks. But
at the portfolio level, and when we consider a bond benchmark with a
constant maturity, there is no fundamental reason for not
approximating the dynamics of the index level as if it were a stock
price. 

\section{Details about the Implementation}
\label{sec:input}
Our MCMC algorithm is implemented in R language and relies on the code
provided by~\cite{NumRen10}. We have modified
their implementation in order to cope with the model
that we have introduced in this paper. The starting values for our
algorithm were generated as follows. The volatility vector was created
using a three-month rolling window. We considered as jumps absolute
returns (the absolute value of the returns) that were above $2.57$
standard deviations above the mean (after accounting for
outliers). For a standardized normal distribution, $2.57$ corresponds
to its $0.995$-quantile, meaning that
we should expect that approximately $1\%$ of the absolute returns to
be larger that this threshold. Indeed, in our data, we had more than
$14\%$ of the data exceeding this threshold, confirming the presence
of jumps in returns. Once these ``accidents" have been identified, we
still need to determine the jump return size, the jump volatility size
and the number of jumps that have happened. Indeed, with our
methodology, it is
possible to have more than a single jump occurring during a specific date.
Provided that the jump sizes are by construction far larger than that
returns observed during the days without ``accidents", we
have considered that the jump size corresponds to the observed daily
returns. For example, if we have identified a jump and if the observed
return is -67 bps for that date, then we consider that the jump size
is the same amount. We have made a similar assumption for the
volatility. So the
volatility jump simply correspond to the difference in volatility
between the actual estimation for a particular day and the day
before. The process followed by the sum of the jump returns is a
Poisson compound variable with normally distributed jumps. The same
for the process followed by the sum of the volatility jumps, except
that the jumps are this time exponentially distributed. Provided that
the jumps can be negative or positive for returns but not for the
volatilities (always positive), it is much easier to consider
the volatility to determine the number of jumps. It is the reason why
we consider a compound Poisson process $X$ for the jumps in volatility
where the Poisson process results in a $\mathcal{P}(\lambda)$ and the jump
sizes are exponential variables $\xi^\nu$ of parameter
$\nu$. It is easy to show that the distribution function
$F(x)$ of the process $X$ is simply given by
$$ F(x) = \Pr(X \leq x) = \sum_{k=0}^{+\infty}p_k \Pr\left( \sum_{i
    =1}^k \xi^\nu \right),$$ 
where $p_k = e^{-\lambda}\frac{\lambda^k}{k!}$. If there is no jump,
then $X = 0$ with probability $p_0 = e^{-\lambda}$. A direct
consequence is that the density does not exist. In order to overcome this issue, 
we propose to estimate $\lambda$ based on the proportion of days having at least a jump. The
probability of having no accident in a particular day is
$e^{-\lambda} \approx 1- \lambda$, when $\lambda$ is small. From that
formula, we can simply estimate $\lambda$ by the ratio of days with at
least one jump over the total number of days within that
period. 


Once these parameters have been estimated, we still need to determine
the number of jumps occurring when an accident is detected. It is
well-known that the convolution of $k$ iid exponential variables is
given by a Gamma distribution with parameters $k$ and $\lambda$.
So, based on the observation of the jump size, we can determine the
number of jumps by determining which $k$ gives the highest density.\\

Having determined the state space variables, we can now tackle the
problem of estimating the model parameters. We have implemented either
the Gibbs sampler or Metropolis-Hasting depending on whether or not we
knew the closed form of the parameters' distributions. In order to able to
compare our results with those in~\cite{EraJohPol03} and
in~\cite{NumRen10}, we have used the same
priors. Note that little information is imposed through priors. They
are as follows: $\mu \sim \mathcal{N}(0,1), \kappa \sim
\mathcal{N}(0,1), \kappa\theta
\sim \mathcal{N}(0,1), \rho \sim \mathcal{U}(-1,1), \sigma_\nu^2 \sim
\mathcal{IG}(2.5, 0.1), \mu_y
\sim \mathcal{N}(0,100), \rho_J \sim \mathcal{N}(0,1), \sigma_y^2 \sim
\mathcal{IG}(5,20), \mu_\nu \sim \mathcal{G}(20,10)$, and $\lambda
\sim \beta(2,40)$. Another reason for us
to use parameters provided by other studies is to guarantee that the
priors have been determined independently from their posterior
distributions. In~\cite{EraJohPol03}, it was also shown that these
priors may lead to very different posteriors when applied to S\&P~500
and Nasdaq~100 index returns. Moreover, the main objective of our work
is not to challenge the estimation of the parameters and in particular the
assumptions about the priors, but merely to show that it is possible to
compute a semideviation even when we consider complex stochastic
processes. 




\end{document}